\theoremstyle{plain}
\DeclarePairedDelimiter{\floor}{\lfloor}{\rfloor}
\newtheorem{remark}{Remark} 
\newtheorem{example}{Example}
\newtheorem{theorem}{Theorem}
\newtheorem{lemma}{Lemma}
\begin{document}
\bstctlcite{IEEEexample:BSTcontrol}
\title{Initialization Algorithms For Convolutional Network Coding}
\author{Maxim Lvov and Haim H.  Permuter
\thanks{The work of M. Lvov and H. Permuter was supported by the Israel Science Foundation (grant no. 684/11), by the ERC starting grant and by the Israeli Ministry of Defense.}%
\thanks{This paper will be presented in part at the 2014 International Symposium on Network Coding, Aalborg, Denmark.}%
\thanks{M. Lvov and H. H. Permuter are with the Department of Electrical and Computer Engineering, Ben-Gurion University of the Negev, 84105, Beer-Sheva, Israel (email: maxlvo55@gmail.com; haimp@bgu.ac.il).}}
\maketitle
\begin{abstract}
We present algorithms for initializing a convolutional network coding scheme in networks that may contain cycles.
An initialization process is needed if the network is unknown or if local encoding kernels are chosen randomly.
During the initialization process every source node transmits basis vectors and every sink node measures the impulse response of the network. The impulse response is then used to find a relationship between the transmitted and the received symbols, which is needed for a decoding algorithm and to find the set of all achievable rates.
Unlike acyclic networks, for which it is enough to transmit basis vectors one after another, the initialization of cyclic networks is more involved, as  pilot symbols interfere with each other and the impulse response is of infinite duration. 
\end{abstract}

{\bf Keywords:} Cayley-Hamilton Theorem, Convolutional network coding, Cyclic networks, Linear network coding, System identification.

\section{Introduction}
 Network coding is a technique that is used to increase a network's throughput. The idea behind this coding scheme is that the relay nodes transmit functions of the received symbols on their output links, rather than simply routing them. Ahlswede \emph{et al.}\cite{CITE_NetworkInformationFlow} showed that for a one source, multicast, acyclic network, the maximal network's throughput is equal to the minimum cut between the source and any sink node. They also showed that for some networks, the ordinary routing scheme cannot achieve the min-cut bound, although a network coding scheme can.
  For cyclic networks, a Convolution Network Coding (CNC) scheme was presented by  Li \emph{et al.}\cite{CITE_LinearNetworkCoding}, and the existence of an optimal CNC code (one that achieves the min-cut bound given in \cite{CITE_NetworkInformationFlow}) was proved by Koetter and M\'{e}dard \cite{CITE_Algebraic}. Since then, much work has been devoted to constructing codes for cyclic networks \cite{CITE_Algebraic,CITE_On-convolutional,CITE_Commutative_Algebra,CITE_Efficient_codes,CITE_Efficient_code_design}, but all these code-construction algorithms share one major drawback; they all need to know in advance the network topology. In particular, if the network is large, it might be difficult to learn the exact network structure.
     
A randomized linear network coding approach was presented by Ho \emph{et al}\cite{CITE_Random_Multicast}. 
They showed that for a cyclic network, all sink nodes will be able, with high probability, to decode the symbols sent by the source nodes, provided that the transmission rates of all sources satisfy the Min-Cut Max-Flow condition and that the local encoding kernels are chosen randomly from a large enough field.
 The Min-Cut Max-Flow condition states that for every subset $\mathcal{A}$ of source nodes,
   the sum of source rates $\sum_{s\in\mathcal{A}} R_s$   must be less than or equal to the minimum cut capacity between every sink node and $\mathcal{A}$.
   
   This result makes random linear encoding extremely useful when the network is dynamic and no central authority for assigning encoding kernels exists. The local encoding kernels can be chosen randomly from some large enough field and, with high probability, this will lead to a network that allows source nodes to transmit symbols at high rates, thereby enabling all sink nodes to decode the sent symbols. This outcome, however, requires that the source nodes know the capacity region and that the sink nodes know a decoding algorithm. If the network structure or the local encoding kernels are not known, an initialization process is needed.

In this paper, we present two initialization algorithms that find a decoding scheme for the sink nodes and one algorithm that finds the capacity region for the source nodes.
     The decoding scheme is found by sending pilot basis vectors and measuring 
         the impulse response of the network, a method analogous to the one given in \cite[p.~448]{CITE_Raymond} for acyclic networks.
    Although the impulse response of the network can be of infinite duration, our algorithms find a decoding scheme using  only the initial values of the impulse response. 
    In the first algorithm, we transmit basis vectors and measure the impulse response of the network under the assumption that the initial symbols sent on the network are zeros. In the second algorithm, we assume that neither the initial symbols are zeros nor that it is possible to clear all these symbols at once. 
    Our algorithms do not require any additional headers to be transmitted. This simplifies the design of the relay nodes, since they do not operate differently during and after the initialization process. The method for finding the capacity region is based on the fact that the connection between the source and the sink nodes is possible if the transfer matrix is of full rank\cite{CITE_Algebraic}.
    
         A randomized initialization of convolutional network codes was introduced by Guo \emph{et al}\cite{CITE_Localized_dimension_growth}. Their method used a time-variant decoding algorithm proposed in \cite{CITE_time-variant_decoding} to decode the transmitted symbols. By that method, one can decode all of the transmitted symbols  up to time $n$ by using the first $n+L$ terms of the network's impulse response, where $L$ is the decoding delay. Our results can improve their algorithms since we have developed a method to find the full impulse response (by expanding the global encoding kernels, found in the initialization process, into power series) from a finite set of its  initial values. After finding the global encoding kernels, both time-variant decoding \cite{CITE_time-variant_decoding} presented by Guo \emph{et al}, and the sequential decoding algorithm\cite{CITE_Efficient_code_design} presented by Erez and Feder can be used.
    
        Methods for identifying an unknown linear time-invariant (LTI) system from its impulse response are well known from control theory. In particular, these methods are used to find a state space representation of the system, i.e. to find the matrices $\textbf{A},\textbf{B},\textbf{C}$ and $\textbf{D}$ such that the following state equations will satisfy the input-output relationship of the system:
        \begin{align}
        \textbf{x}[n+1]&=\textbf{A}\textbf{x}[n]+\textbf{B}\textbf{u}[n], \nonumber \\
            \textbf{y}[n]&=\textbf{C} \textbf{x}[n]+\textbf{D} \textbf{u}[n], \label{Intro_state_space}  
        \end{align}
        where $\textbf{u}[n]$ is the input vector, $\textbf{y}[n]$ is the output vector and $\textbf{x}[n]$ is the state vector.  Usually the state space representation obtained by these methods is an approximate one and is based on statistical methods\cite{CITE_sys_iden}. After such a representation is found, the transfer function can be found by applying the $Z$-transform on (\ref{Intro_state_space}):
    \begin{align}
    \textbf{H}(z)&=\textbf{C}\cdot adj(z\textbf{I}-\textbf{A}) \cdot \textbf{B}/P_\textbf{A}(z) + \textbf{D}. \label{Intro_transfer_function}
        \end{align}
    Here, $P_\textbf{A}(z)=\det(z\textbf{I}-\textbf{A})$ is the characteristic polynomial of $\textbf{A}$ and $ \textbf{H}(z)$ is the transfer function of the system. A cyclic network with a convolutional network coding scheme can also be described by a state space representation, as was introduced by Fragouli and Soljanin\cite{CITE_convolutional}. 
    
    System identification is closely related to the initialization process we show here. In both cases we have an unknown LTI system, for which we want to find the input-output relationship without learning the exact structure of the system, but only by sending pilot input vectors.
    However, our motivation for finding this input-output relationship differs from that usually cited in control theory, where we tend to look for the input sequence in order to obtain the desired  output sequence. In our case, we need the input-output relationship to be able to decode the transmitted symbols and to find the capacity region for all source nodes. There are also other differences between system identification and our initialization process, such as the fact that LTI systems usually work in the field of real or complex numbers, while the networks we work with use finite fields.
    
    One of the methods to find an input-output relationship of a deterministic LTI system from its impulse response is described by the Ho-Kalmans Method \cite[p.~142]{CITE_sys_iden}. Using that method, we first need to measure the impulse response $\left\{\textbf{G}[n]\right\}_{n=1}^{k+l}$ (where $k,l$ are any numbers that are greater than the order of the system) and construct the Hankel matrix
    \begin{align}
       \textbf{H}_{k,l}&=
       \begin{bmatrix}
       \textbf{G}_1 & \textbf{G}_2 & \textbf{G}_3 & \cdots & \textbf{G}_l \\[0.3em]
           \textbf{G}_2 & \textbf{G}_3 & \textbf{G}_4 & \cdots & \textbf{G}_{l+1} \\[0.3em]
           \textbf{G}_3 & \textbf{G}_4 & \textbf{G}_5 & \cdots & \textbf{G}_{l+2} \\[0.3em]
       \vdots & \vdots & \vdots & \ddots & \vdots \\[0.3em]
               \textbf{G}_{k} & \textbf{G}_{k+1} & \textbf{G}_{k+2} & \cdots & \textbf{G}_{l+k}
       \end{bmatrix}. \label{Hankel_Mat} 
       \end{align}
     Using a singular value decomposition (SVD) of $\textbf{H}_{k,l}$, a minimal realization of the system is constructed as described in \cite{CITE_sys_iden}, from which a transfer function is found. 
     This method, however, assumes that the field over which linear combinations are performed is the set of real or complex numbers. In our case, the field is finite and no SVD operation is defined.
     A method for system realization from its Hankel matrix is described in \cite[p.~498]{CITE_Rugh}, but it requires us to know  the rank of Hankel matrices of higher orders  (for larger $k,l$).

     The transfer function of an LTI system can be  found if one knows the characteristic polynomial $P_\textbf{A}(z)$ of the matrix $\textbf{A}$.  One can pass the output of the system through a finite impulse response (FIR) filter with a transfer function $P_\textbf{A}(z)$ such that the total transfer function of the cascaded system would be
          \begin{align}
         \textbf{H}(z) P_\textbf{A}(z)=\textbf{C}\cdot adj\left(z\textbf{I}-\textbf{A}\right)\cdot \textbf{B}. \label{Eq_IIR_to_FIR}
          \end{align}
          The transfer function in (\ref{Eq_IIR_to_FIR}) is a polynomial in $z$ and, hence, can be obtained by sending basis vectors and measuring the finite impulse response.
      A method to find the characteristic polynomial $P_\textbf{A}(z)$ from the diagonal minors of the Hankel matrix was introduced by Sreeram in \cite{CITE_HankelMatrix}. However, this method requires us to know the order of the system, i.e. the dimension of the state vector in its minimal realization, which is not usually known a priori when we consider unknown networks. In the methods we present only the number of edges and the maximal transmission rate for every source (or an upper bound for each of them) are needed.
      
  The paper is divided into seven sections. In Section II we outline notations and define the problem. In Section III we present two algorithms for network initialization and one for finding the capacity region of the network. In Sections IV, V and VI  we explain why these algorithms work, one algorithm per section. Section VII concludes the paper.
  In Appendix A we show examples for applying the algorithms. In Appendix B we give the proofs for all the theorems and lemmas.

\section{Notations and Problem Definition}

We represent a communication network by a directed graph $G=(\mathcal{V},\mathcal{E})$ where $\mathcal{V}$ is the set of nodes and $\mathcal{E}$ is the set of edges. Each edge represents a noiseless directed link that can transmit one symbol per unit time, where the symbols are scalars from some field $\mathbb{F}$. We assume every link has a unit time delay between consequent symbol transmissions and transmissions on all links are synchronized.

We denote by $\mathcal{S}$ the set of all source nodes and by $\mathcal{D}$ the set of all sink nodes. Every source node $s\in\mathcal{S}$ transmits $R_s$ symbols per unit time. Every sink node wants to receive all the symbols sent by all the source nodes. For every edge $e\in\mathcal{E}$, we say that $u=head(e)$ and $v=tail(e)$ if $u,v\in\mathcal{V}$ and $e$ is from $v$ to $u$. We  denote by $In(u)=\left\{e\in\mathcal{E} : u=head(e)\right\}$ and $Out(u)=\left\{e\in\mathcal{E} : u=tail(e)\right\}$. The symbol that is sent on the edge $e$ at time $n\in\mathbb{Z}$ is denoted by $x_e[n]$.
We denote vectors or sequences of vectors by lowercase  bold letters, while matrices are denoted by bold capital letters.
We assume there is a CNC scheme in the network, so that the symbol sent on a link $i\in Out(j)$ is a linear combination of the symbols received and generated by the node $j$ in the previous time slot. This relationship can be written as
\begin{align}
x_i[n+1]&=\sum_{e\in In(j)} a_{i,e} x_e[n] + \sum_{k=1}^{R_j} b_{i,k}u_{j,k}[n], \hspace{3 mm} \forall i\in\mathcal{E}, \hspace{1 mm} \forall n\geq 0, \label{LNC}
\end{align}
where $u_{j,k}[n]$ is the $k$'th symbol generated by node $j$ (if $j\in\mathcal{S}$) at time $n$, and $\left\{a_{i,e}, b_{i,k}\right\}$ are the \textit{local encoding kernels} for node $j$ that were chosen in advance (probably randomly). By letting $x_i[n]$ depend only on the previously received symbols, we avoid the problem described in \cite{CITE_Conditions_Determine_CNC} by Cai and Guo, when the convolutional code is not well defined in a cyclic network. 
If the network has a reset option that clears all the sent symbols in the network, we can assume that the initial network state is zero:
\begin{align}
x_i[0]=0, \hspace{5 mm} \forall i\in\mathcal{E}. \nonumber
\end{align}
\begin{example}
As an example, we consider the network in Fig.\ref{fig_example1}. There is one source node $s_1$ and one sink node $d_1$. By the Min-Cut Max-Flow Theorem the rate $R_{s_1}=1$ is achievable, and the network state equations can be written in the next form: \label{example_1}
\begin{align}
	\begin{pmatrix}
       x_1[n+1] \\
       x_2[n+1] \\
       x_3[n+1] \\
       x_4[n+1]
      \end{pmatrix} =
    \begin{pmatrix}
     0 & 0 & 0 & \alpha_{1,4}  \\
     \alpha_{2,1} & 0 & \alpha_{2,3} & 0  \\
     0 & 0 & 0 & \alpha_{3,4} \\
	 0 & \alpha_{4,2} & 0 & 0
    \end{pmatrix}
        \begin{pmatrix}
               x_1[n] \\
               x_2[n] \\
               x_3[n] \\
               x_4[n]
              \end{pmatrix} + \textbf{B}_{s_1}\cdot\textbf{u}_{s_1}[n].
    \end{align}
\begin{figure}[t!]{
\psfrag{s1}[][][1]{$s_1$}
\psfrag{d1}[][][1]{$d_1$}
\psfrag{r1}[][][1]{$ $}
\psfrag{x1}[][][1]{$x_1[n]$}
\psfrag{x2}[][][1]{$x_2[n]$}
\psfrag{x3}[][][1]{$x_3[n]$}
\psfrag{x4}[][][1]{$x_4[n]$}
 \centerline{\includegraphics[width=7cm]{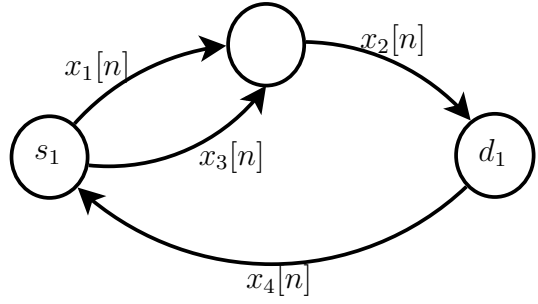}}
\caption{Network with one source node, one sink node and one relay node.} \label{fig_example1}
}\end{figure}
If the rate $R_{s_1}$ is set to $1$ then ${u}_{s_1}[n]$ is a scalar sequence and $\textbf{B}_{s_1}=(b_{1,1}, 0, b_{3,1}, 0)^T$ is a $4\times 1$ matrix over $\mathbb{F}$. 
\end{example}
Note that we have restricted ourselves to the case where local encoding kernels are scalars, while in the general case they can be rational power series in the time shift operator \cite[p.~492]{CITE_Raymond}. This, however, is not a major restriction, since one can achieve the capacity region without rational local encoding kernels if the field one works with is large enough \cite[p.~502]{CITE_Raymond}. Nevertheless, we treat separately network codes with rational power series encoding kernels at the end of section III. 

We define a time shift operator $z$ acting on a sequence (of scalars or vectors) $\{c[n]\}_{n\in\mathbb{Z}}$ as follows:
\begin{align}
(z^kc)[n]=c[n+k], \forall k,n\in \mathbb{Z}. \label{timeshift}
\end{align}
Let $P(t)=\sum_{k=0}^M a_kt^k$ be a polynomial in $t$ with coefficients from the field $\mathbb{F}$. We define the operator $P(z)$ as follows:
\begin{align}
\left(P(z)c\right)[n]=\sum_{k=0}^M a_kc[n+k], \hspace{1 mm} \forall k,n\in \mathbb{Z}. \label{Polynomial_operator}
\end{align}
Finally, the coefficients $\{a_k\}$ of $P(t)$ can also be $m\times k$ matrices over the field $\mathbb{F}$. In that case, the sequence $\{\textbf{c}[n]\}_{n\in\mathbb{Z}}$ in (\ref{Polynomial_operator}) should be a sequence of $k\times 1$ vectors. In order to avoid ambiguity, we will not use Z transforms of sequences and the symbol $z$ will appear only as a time shift operator.

  Let $\textbf{x}[n]$ be the column vector of size $|\mathcal{E}|$ consisting of all symbols $\{x_e[n]\}_{e\in\mathcal{E}}$ organized in some order. We define the \textit{input sequence} $\textbf{u}[n]=\left(\textbf{u}_{s_1}^T[n],...,\textbf{u}_{s_{|\mathcal{S}|}}^T[n]\right)^T$ where $\textbf{u}_{s_i}[n]=\left(u_{s_i,1}[n],...,u_{s_i,R_{s_i}}[n]\right)^T$ is the \textit{input sequence  of source $s_i$}, which is a sequence of vectors sent by source $s_i$. The dimension of the column vector $\textbf{u}[n]$ is $m=\sum_{s\in\mathcal{S}}R_s$. We assume that $\textbf{u}[n]=\textbf{0}$ for $n<0$.
  For every sink node $d$, we let $\textbf{y}_d[n]$ be a column vector consisting of all received symbols $\{x_e[n] : e\in In(d)\}$ and the symbols generated by $d$,$\{u_{d,k}[n]\}_{k=1}^{R_d}$ if $d$ is also a source node, again organized in some order. The sequence $\left\{\textbf{y}_d[n]\right\}_{n\in\mathbb{Z}}$ will be called the $\textit{output sequence}$ of the sink node $d$, and the dimension of every vector in that sequence is $l_d=R_d+\left\vert{In(d)}\right\vert$. We assume also that $\textbf{y}_d[n]=\textbf{0}$ for $n<0$.
  \begin{example}
  The shuttle network shown in Fig. \ref{fig_conversation_network} is used as an example.  The nodes $s_1,s_2$ are both source and sink nodes, and have the same transmission rates $R_{s_1}=R_{s_2}=1$. The state vector is $\textbf{x}[n]=\left(x_1[n],x_2[n],...,x_8[n]\right)^T$, the input sequence is $\textbf{u}[n]=\left(u_{s_1,1}[n], u_{s_2,1}[n]\right)^T$ ($m=2$), and the output sequences are $\textbf{y}_{s_1}[n]=\left(x_6[n],u_{s_1,1}[n]\right)^T$ and $\textbf{y}_{s_2}[n]=\left(x_7[n],u_{s_2,1}[n]\right)^T$. Both $l_{s_1}$ and $l_{s_2}$ are equal to 2.
  \begin{figure}[h]{
  \psfrag{s1}[][][1]{$s_1$}
  \psfrag{s2}[][][1]{$s_2$}
  \psfrag{x1}[][][1]{$x_1[n]$}
  \psfrag{x2}[][][1]{$x_2[n]$}
  \psfrag{x3}[][][1]{$x_3[n]$}
  \psfrag{x4}[][][1]{$x_4[n]$}
  \psfrag{x5}[][][1]{$x_5[n]$}
  \psfrag{x6}[][][1]{$x_6[n]$}
  \psfrag{x7}[][][1]{$x_7[n]$}
  \psfrag{x8}[][][1]{$x_8[n]$}
   \centerline{\includegraphics[width=12cm]{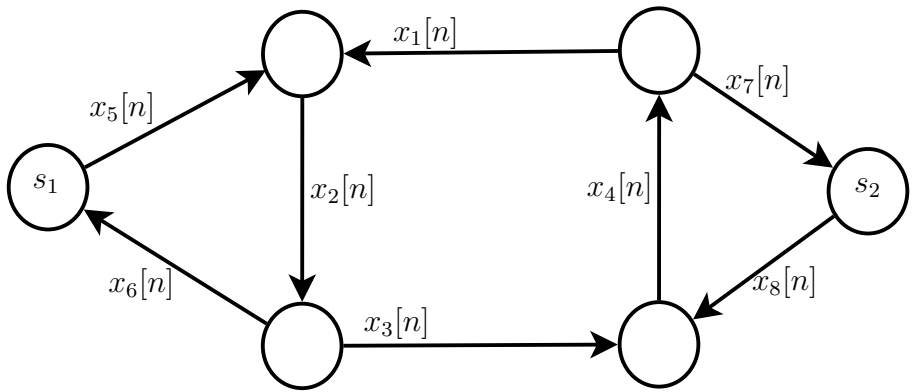}}
  \caption{Shuttle network with two users and 4 relay nodes.} \label{fig_conversation_network}
  }\end{figure}
  \end{example}

 We assume that either the network topology or the local encoding
  kernels or both are not known to any node a priori. We are interested in finding the network transfer matrix, or a way to decode the sent symbols $\{\textbf{u}[n]\}$ from the received symbols $\{\textbf{y}_d[n]\}$ at every sink node $d$, probably with some delay. This transfer function is obtained in our algorithms by sending pilot symbols and measuring the impulse response of the network. Even though we assume the network is unknown, our algorithms need all source and sink nodes to know some parameters of the network before the initialization process starts. These parameters can be shared by some distribution protocol or assumed to be known a priori. The parameters are:
\begin{itemize}
\item The set of source nodes $\mathcal{S}$ and their transmission rates $\{R_s\}_{s\in\mathcal{S}}$ (or an upper bound for every rate),
\item The number of edges in the network or an upper bound for it, which will be called $N$.
\end{itemize}
  
We next define the \textit{achievable rates} for the network with specific local encoding kernels as the transmission rates $\left(R_s\right)_{s\in\mathcal{S}}$ of all source nodes that will allow every sink node $d$ to decode the vectors $\left\{\textbf{u}[n]\right\}_{n=0}^{n_0}$ from the vectors $\left\{\textbf{y}_d[n]\right\}_{n=0}^{n_0+\delta_d}$ (where $\delta_d\geq 0$ represents the decoding delay, and is independent of $n_0$) for all $n_0\in\mathbb{N}$. The \textit{capacity region} is defined as the set of all achievable rates.

In the definitions above, we do not restrict the sink nodes to any decoding method, even if these methods use the knowledge of the network topology and the local encoding kernels at every node. We do, however, restrict the network to have a CNC scheme with the chosen local encoding kernels. This restriction is not of great importance, since a CNC scheme with randomly chosen local encoding kernels can reach the capacity region given by the Min-Cut Max-Flow condition.

Before the initialization process starts, a transmission rate for every source node should be chosen.
If an achievable rate for a specific source node is not known, it is preferable to set its rate to $R_s=|Out(s)|$. Algorithm 3, presented in the next section, can then be used to find achievable rates for this source. The source node $s$ can then reduce its rate $R_s$ to an achievable one by sending zeros on some of its input sequences $\left\{u_{s,1},...,u_{s,|Out(s)|}\right\}$.
In that case, we call the rates $\left(R'_s\right)_{s\in\mathcal{S}}$ \textit{achievable for a sink node $d$} if that sink node can decode the input sequence $\textbf{u}$ from the output $\textbf{y}_d$ when every source node $s$ transmits symbols on $R'_s$ out of its input sequences, and zeros on the rest of the $(R_s-R'_s)$ input sequences. Note that the rates $\left(R'_s\right)_{s\in\mathcal{S}}$ are achievable if they are achievable for every sink node.

\begin{example}
Recall the network from Fig. \ref{fig_example1} that was considered in Example \ref{example_1}. If the network topology and the capacity region are not known to $s_1$, the rate $R_{s_1}$ can be set to 2 as there are two outgoing links from the source node. In that case, $\textbf{u}_{s_1}[n]$ will be taken as a $2\times 1$ vector sequence and $\textbf{B}_{s_1}$ as a $4\times 2$ matrix:

\begin{align}
\textbf{B}_{s_1}= \begin{pmatrix}
               b_{1,1} & b_{1,2} \\
               0 & 0 \\
               b_{3,1} & b_{3,2} \\
               0 & 0
              \end{pmatrix}, \hspace{4 mm} 
\textbf{u}_{s_1}[n]= \begin{pmatrix}
               u_{s_1,1}[n] \\
               u_{s_1,2}[n] 
              \end{pmatrix}.           
\end{align}
After finding the capacity region, the rate can be reduced to an achievable one by sending zeros on one of the input sequences
\begin{align}
u_{s_1,1}[n]=0, \forall n\in\mathbb{N} \hspace{5 mm}\text{or} \hspace{5 mm} u_{s_1,2}[n]=0 , \forall n\in\mathbb{N}.
\end{align}
\end{example}

\section{The Initialization Algorithms}
In this section, we present two initialization algorithms that find a decoding scheme for the sink nodes and one algorithm that finds the capacity region for the source nodes. The purpose of the first two is to find a difference equation of the following form: 
   \begin{align}
   P_d(z)\textbf{y}_d=\textbf{G}_d(z)\textbf{u}. \label{diffeq}
   \end{align}
  This form describes the relationship between the transmitted sequence $\textbf{u}[n]$ and the received sequences $\textbf{y}_d[n]$ (for every sink node $d$).
 In (\ref{diffeq}), $P_d(z)$ is a polynomial in the time shift operator $z$, and $\textbf{G}_d(z)$ is a matrix with polynomial elements. These operators are defined in Section II.
  Using a decoding method similar to the one shown in \cite{CITE_Efficient_code_design}, we can show 
     that it is possible to decode the input sequence from the output when the polynomial $P_d(z)$ is not the zero polynomial and the \textit{transfer matrix}  $\textbf{G}_d(z)$ is of full column rank over the polynomial ring $\mathbb{F}[z]$.

   The difference between the two initialization algorithms is that in the first, it is assumed that we can perform a reset operation on the network at some fixed times and, therefore, this algorithm is a bit faster than the second algorithm that does not operate under this assumption. 
   The purpose of the third algorithm is to find achievable rates for all source nodes. 
   This is done by examining the transfer matrix $\textbf{G}_d(z)$ for every sink node $d$. To obtain this matrix, one of the initialization algorithms should be used first.

We now present the first algorithm.
Its first part consists of $\left(\sum_{s\in\mathcal{S}} R_s\right)$ loops. Every loop takes $2N+1$ time units, and after each loop the symbols on all edges are cleared.
Algorithm 1 is applied in Example \ref{example_alg1} in the appendix.

\begin{algorithm}[!htp]
  \caption{Initialization algorithm with network resetting}

  \begin{algorithmic}
   \State
   \Statex \begin{enumerate}
   \item For every $s\in\mathcal{S}$, and for every $j\in\{1,2,...,R_s\}$ do the following:
   \begin{itemize}
   \item Send the sequence $\textbf{u}_{s}^j[n]=\left(u_{s,1}^j[n],...,u_{s,R_s}^j[n]\right)^T$ at the times $n=0,1,...,2N$, where
   \begin{align}
   u_{s,i}^j[n]=\begin{cases}
       1,& i=j \hspace{5 mm} \text{and} \hspace{5 mm} n=0\\
       0,& i\neq j \hspace{5 mm} \text{or} \hspace{5 mm} 1\leq n \leq 2N
   \end{cases} \hspace{5 mm} \forall i\in\left\{1,...,R_s\right\}.
   \end{align}
   \item For all source nodes $\tilde{s}\neq s$, send zeros on their input sequences: $\textbf{u}_{\tilde{s}}^j[n]=\textbf{0}$.
  \item Every sink node $d$ should store its received vectors $\left\{\textbf{y}_{d}^{s,j}[n]\right\}_{n\in\{0,..,2N\}, s\in\mathcal{S}, j\in\{1,...,R_s\}}$, where each vector $\textbf{y}_{d}^{s,j}[n]$ is of dimension $l_d$.
   \item Reset the network after $n=2N$, by setting $n=0$ and $\textbf{x}[0]=\textbf{0}$.
   \end{itemize}
   \item For every sink node $d$ do the following:
      \begin{itemize}
      \item Combine the received vectors into matrices of size $l_d\times m$:
      \begin{align}
      \textbf{M}_d[n]=\left[\textbf{y}_{d}^{s_1,1}[n],...,\textbf{y}_{d}^{s_1,R_{s_1}}[n],\textbf{y}_{d}^{s_2,1}[n],...,\textbf{y}_{d}^{s_2,R_{s_2}}[n],...,\textbf{y}_{d}^{s_{|\mathcal{S}|},R_{s_{|\mathcal{S}|}}}[n]\right]. \label{matrix_combining}
      \end{align}
      \item Find any non trivial solution to the set of linear equations
         \begin{align}
         \sum_{k=0}^{N}\alpha_{d,k}\textbf{M}_d[k+\tau]=\textbf{O}, \forall \tau=1,...,N,
          \label{mamal}
         \end{align}
         where $\textbf{O}$ is the $l_d\times m$ zero matrix and $\{\alpha_{d,k}\}_{k=0}^{N}\subseteq \mathbb{F}$ are the unknowns. This set has $l_d\times m \times N$ equations and it has always a non trivial solution.
     
      \end{itemize}

	\algstore{myalg1}

    \end{enumerate}
     \end{algorithmic}
   \end{algorithm}
  % \iffalse
   \begin{algorithm}
   \begin{algorithmic} [1]
   \algrestore{myalg1}
   \STATE
 %  \begin{enumerate}
 %  \setcounter{enumi}{1}

   \begin{itemize}
   \item Construct the polynomial $P_d(z)$ and the matrix $\textbf{G}_d(z)$ as
           \begin{align}
           P_d(z)&=\sum_{k=0}^{N} \alpha_{d,k}z^k, \label{Polynom} \\
           \textbf{G}_d(z)&=\sum_{k=1}^{N}\sum_{j=k}^N \alpha_{d,j} \textbf{M}_d[j-k+1]z^{k-1}+\textbf{M}_d[0]P_d(z). \label{TransMatrix}
           \end{align}
           \item The difference equation that describes the relationship between the input and the output sequences $\textbf{u}[n]$ and $\textbf{y}_d[n]$  is given in (\ref{diffeq}), with the polynomial $P_d(z)$ and the matrix $\textbf{G}_d(z)$ as defined in (\ref{Polynom}-\ref{TransMatrix}). If $\textbf{G}_d(z)$ is of full column rank over the polynomial ring $\mathbb{F}[z]$, then $\textbf{u}[n]$ can be decoded from $\textbf{y}_d[n]$ by solving (\ref{diffeq}). Otherwise, the transmission rates $\left\{R_s\right\}$ of some source nodes should be reduced, or other local encoding kernels should be chosen.

   \end{itemize}
 %  \end{enumerate}	

   \end{algorithmic}
  \end{algorithm}
%  \fi
\newpage
We now present the second algorithm, in which no resetting operation is needed. We consider the case when the network initial state is $\textbf{x}_0\neq\textbf{0}$ and $\textbf{x}_0$ is unknown. Algorithm 2 is similar to the first, except that this algorithm takes additional $2N+1$ time units (only in case $\textbf{x}_0\neq\textbf{0}$) and the expression for obtaining $\textbf{G}_d(z)$ is a bit different. If $\textbf{x}_0=\textbf{0}$ then we can skip the operations in the first $(2N+1)$ time units since the measured output vectors will contain only zeros. Algorithm 2 is applied in Example \ref{example_alg2} in the appendix.

\begin{algorithm}[!htp]
  \caption{Initialization algorithm without network resetting}

  \begin{algorithmic}
   \State
   \Statex \begin{enumerate}
   \item The input sequence $\textbf{u}[n]=\left(u_1[n],...,u_m[n]\right)^T$ that should be sent is
   \begin{align}
   u_i[n]=\begin{cases}
          1,& n=(2N+1)i \\
          0,& \text{otherwise}
      \end{cases} \hspace{5 mm} \forall i\in\left\{1,...,m\right\},\hspace{2 mm} 0\leq n \textless (m+1)(2N+1), \label{pilot_sequence}
   \end{align}
   where $m$ is the dimension of $\textbf{u}[n]$. Note that to send the above sequence, every source node $s\in\mathcal{S}$ should send the symbol $1$ on every one of its inputs in turn $(u_{s,1},...,u_{s,R_s})$ at the correct time, and zeros at all other times.

	\algstore{myalg2}

    \end{enumerate}
     \end{algorithmic}
   \end{algorithm}
 \begin{algorithm}
   \begin{algorithmic} [1]
   \algrestore{myalg2}
   \STATE
   \begin{enumerate}
   \setcounter{enumi}{1}
\item For every sink node $d$ do the following:
  
   \begin{itemize}
         \item Find any non trivial solution to the set of linear equations
         \begin{align}
         	&\sum_{j=0}^N \alpha_{d,j} \textbf{y}_d[j+\tau]=\textbf{0}, \hspace{5 mm} \forall \tau\in \bigcup_{p=0}^{m}\bigcup_{\tilde{\tau}=1}^N\{(2N+1)p+\tilde{\tau}\}, \label{mamalNew}
            \end{align}
             where $\{\alpha_{d,j}\}_{j=0}^{N}\subseteq \mathbb{F}$ are the unknowns. This set has $l_d \times N \times (m+1)$ equations, and it
             always has a non trivial solution.
   \item The polynomial $P_d(z)$ and the matrix $\textbf{G}_d(z)$ are defined below:
                \begin{align}
                P_d(z)&=\sum_{k=0}^{N} \alpha_{d,k}z^k \label{Polynom2}, \\
                \textbf{g}_{d,i}(z)&=\sum_{k=1}^{N+1} \sum_{j=0}^N \alpha_{d,j}\textbf{y}_{d}[j+(2N+1)i-k+1]z^{k-1}, \hspace{5 mm} \forall i\in\{1,...,m\}, \label{TransMatrixNew2} \\
               \textbf{G}_d(z)&=\left[\textbf{g}_{d,1}(z),\textbf{g}_{d,2}(z),...,\textbf{g}_{d,m}(z)\right]. \label{TransMatrixNew1} 
                \end{align}
   \item The difference equation that describes the relationship between the input and the output sequences $\textbf{u}[n]$ and $\textbf{y}_d[n]$ for $n\geq 1$ is given in (\ref{diffeq}), with the polynomial $P_d(z)$ and the matrix $\textbf{G}_d(z)$ as defined in (\ref{Polynom2}-\ref{TransMatrixNew1}). If $\textbf{G}_d(z)$ is of full column rank over the polynomial ring $\mathbb{F}[z]$, then $\textbf{u}[n]$ can be decoded from $\textbf{y}_d[n]$ by solving (\ref{diffeq}). Otherwise, the transmission rates $\left\{R_s\right\}$ of some source nodes should be reduced, or other local encoding kernels should be chosen.

   \end{itemize}
   \end{enumerate}	

   \end{algorithmic}
  \end{algorithm}
%\newpage
We now present the third algorithm that allows us to find achievable rates for all source nodes in the network, with the chosen local encoding kernels. It uses the matrix $\textbf{G}_d(z)$ from (\ref{diffeq}) and hence Algorithm 1 or 2 should be used first to find the matrix. At the end of this algorithm, every sink node $d$ will be able to tell what rates are achievable for it. 

\begin{algorithm}[!h]
  \caption{Finding the capacity region}

  \begin{algorithmic}
   \State
   \Statex
   	The capacity region is found as follows:
   	\begin{itemize}
      	\item For every sink node $d$, split the matrix $\textbf{G}_d(z)$ into $|\mathcal{S}|$ matrices, such that each matrix $\textbf{G}_{d,s}(z)$ has $R_s$ columns and such that the following will hold:
         \begin{align}
         	\textbf{G}_d(z)\textbf{u}&=\left[
         	\textbf{G}_{d,s_1}(z),...,\textbf{G}_{d,s_{|\mathcal{S}|}}(z) \right]
         	\begin{bmatrix}
         	\textbf{u}_{s_1} \\ \vdots \\ \textbf{u}_{s_{|\mathcal{S}|}} 
         	\end{bmatrix} \nonumber \\
         	&=\sum_{s\in \mathcal{S}} \textbf{G}_{d,s}(z)\textbf{u}_s.
         	\end{align}

      \item For every possible $n$-tuple $(R'_s)_{s\in\mathcal{S}}$ with integer
      entries that satisfy $R'_s\leq R_s$, check if for every source node $s$, there exist $R'_s$ column vectors $\left\{\textbf{v}_{s,1},...,\textbf{v}_{s,R_s'}\right\}$ in the columns of the matrix $\textbf{G}_{d,s}(z)$  such that all the vectors $\cup_{s\in\mathcal{S}}\cup_{k=1}^{R_s'}\{\textbf{v}_{s,k}\}$ are linearly independent over the polynomial ring $\mathbb{F}[z]$. If there are such vectors, the rates $(R'_s)_{s\in\mathcal{S}}$ are achievable for the sink node $d$.
      \item The capacity region is obtained by taking all $n$-tuples $(R'_s)_{s\in\mathcal{S}}$ that are achievable for every sink node.
      \end{itemize}
     \end{algorithmic}
   \end{algorithm}

Algorithm 3 allows us to find achievable rates with the currently chosen encoding kernels. If they were chosen randomly from a large enough field, these rates will be, with high probability, all the rates from the capacity region. There is, however, a small probability that the local encoding kernels were not chosen well. In that case, Algorithm 3 will only give the achievable rates with the currently chosen coefficients. Algorithm 3 is applied in Example \ref{example_alg3} in the appendix.
\begin{remark}
Although we restricted ourselves to the case of scalar local encoding kernels, the algorithms can be extended to networks that use CNC with rational power series as local encoding kernels\cite[p.~492]{CITE_Raymond}. In this case the input-output relationship of each node $j\in\mathcal{V}$ can be described by state space equations\cite[p.~481]{CITE_Rugh}.  Denote the state vector of node $j$ by $\tilde{\textbf{x}}_j[n]$, and its dimension by $\dim{\tilde{\textbf{x}}_j[n]}$.
If we concatenate all state vectors $\{\tilde{\textbf{x}}_j[n]\}_{j\in\mathcal{V}}$ into one state vector $\tilde{\textbf{x}}[n]$ of dimension $\sum_{j\in\mathcal{V}} \dim(\tilde{\textbf{x}}_j[n])$, a global state space representation of the network can be written:
\begin{align}
\tilde{\textbf{x}}[n+1]&=\hat{\textbf{A}}\tilde{\textbf{x}}[n]+\hat{\textbf{B}}{\textbf{u}}[n], \label{global_state_space1} \\
{\textbf{y}}_d[n]&=\hat{\textbf{C}}_d \tilde{\textbf{x}}[n]+\textbf{D}_d\textbf{u}[n], \label{global_state_space2}  
\end{align}
where $\hat{\textbf{A}}$, $\hat{\textbf{B}}$, $\hat{\textbf{C}}_d$ and $\textbf{D}_d$ are defined by the network topology and the local encoding kernels. 
        The derivation of our algorithms is based only on the fact that the input-output relationship of the network can be written as state space equations with a state vector of dimension $|\mathcal{E}|\leq N$. In the case where we use rational power series as local encoding kernels,  the algorithms will still apply if we take $N$ to be larger that the dimension of the state vector $\tilde{\textbf{x}}$:
         \begin{align}
         N\geq \sum_{j\in\mathcal{V}} \dim(\tilde{\textbf{x}}_j[n]).
         \end{align}
\end{remark}
%\newpage
\section{Derivation of Algorithm 1}
Our goal is to find a relationship between the input sequence $\textbf{u}[n]$ and the output sequence $\textbf{y}_d[n]$ for every sink node $d$ that will allow it to decode the sent symbols. Such a relationship can be given in the form of a difference equation, similar to that given in (\ref{diffeq}). The problem is how to find a polynomial $P_d(z)$ and a matrix $\textbf{G}_d(z)$ that will satisfy (\ref{diffeq}) for all $n\geq 0$ only from the received symbols $\textbf{y}_d[n]$. We assume, without loss of generality, that $N$ is equal to the number of edges in the network. However, if $N$ is larger, we can assume that there are an additional $2\left(N-|\mathcal{E}|\right)$ virtual nodes and $\left(N-|\mathcal{E}|\right)$ virtual edges between these nodes. The virtual edges are not connected to the original network and have no influence on it. In this way, the number of edges in the new network is $N$.
We observe that in view of (\ref{LNC}) and by the definition of $\textbf{x}[n], \textbf{u}[n]$ and $\textbf{y}_d[n]$, for every sink node $d$, a state space representation of the network can be written as
\begin{align}
\textbf{x}[n+1]&=\textbf{A}\textbf{x}[n]+\textbf{B}\textbf{u}[n], \hspace{3 mm} \textbf{x}[0]=\textbf{x}_0, \hspace{3 mm} \forall n\geq 0,  \label{state-space-1]} \\
\textbf{y}_d[n]&=\textbf{C}_d\textbf{x}[n]+\textbf{D}_d\textbf{u}[n], \hspace{3 mm} \forall n\in\mathbb{Z}, \label{state-space-2]}
\end{align}
where the matrices \textbf{A} and \textbf{B} are of sizes $N\times N$ and $N \times m$, respectively, and are determined by the network structure and the local encoding kernels on every node. An example of the matrices $\textbf{A}$ and $\textbf{B}$ is shown in Example \ref{example_1}. The matrices $\textbf{C}_d$ and $\textbf{D}_d$ contain only ones and zeros and are chosen so that $\textbf{y}_d[n]$ will contain the incoming symbols and the symbols generated by $d$, if $d$ is a source node. 

A general solution to the state equations (\ref{state-space-1]}) and (\ref{state-space-2]}) is given by
\begin{align}
\textbf{y}_d[n]=\textbf{C}_d\textbf{A}^{n}\textbf{x}_0  +\sum_{i=0}^{n-1}\textbf{C}_d\textbf{A}^{n-1-i}\textbf{B}\textbf{u}[i]+\textbf{D}_d\textbf{u}[n],  \hspace{3 mm} \forall n\geq 0. \label{general_sol}
\end{align}
In Algorithm 1, it is assumed that $\textbf{x}_0=\textbf{0}$.
 After the source nodes send basis vectors, as described in step 1 in the algorithm, every sink node has the matrices given in the following lemma.
 \begin{lemma}
 For a network described by the state equations (\ref{state-space-1]})-(\ref{state-space-2]}) with $\textbf{x}_0=\textbf{0}$, let the input sequence $\textbf{u}[n]$ be given by
 \begin{align}
 \textbf{u}_{i}[n]=\begin{cases}
             \textbf{e}_i,&  n=0\\
             \textbf{0},& 1\leq n \leq 2N
         \end{cases}, \label{equation_lemma1_input_squence}
 \end{align}
 where $\textbf{e}_i$ is a basis vector of the form
 \begin{align}
 \textbf{e}_i=\left(a_0,a_1,...,a_m\right)^T, \hspace{3 mm} a_k=\begin{cases}
               1,&  k=i\\
               0,& k\neq i
           \end{cases}. \label{basis_vectors}
 \end{align}
   The output sequence in that case will be
 \begin{align}
 \textbf{y}_{d,i}[n]=\begin{cases}
              \textbf{D}_d\cdot \textbf{e}_i,&  n=0\\
              \textbf{C}_d\textbf{A}^{n-1}\textbf{B}\cdot\textbf{e}_i,& 1\leq n \leq 2N
          \end{cases}.
 \end{align}
 Moreover, if one combines the output vectors into matrices $\textbf{M}_d[n]=\left[\textbf{y}_{d,1}[n],...,\textbf{y}_{d,m}[n]\right]$,
 then the corresponding matrices will be
 \begin{align}
 \textbf{M}_d[n]&=\textbf{C}_d\textbf{A}^{n-1}\textbf{B}, \hspace{5 mm} \forall  1\leq n \leq 2N, \label{Markov} \\
 \textbf{M}_d[0]&=\textbf{D}_d. \nonumber
 \end{align} \label{lemma_markov_parameters}
 \end{lemma}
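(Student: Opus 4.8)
The plan is to prove the lemma by direct substitution into the general solution (\ref{general_sol}), which is already established in the excerpt. The entire statement reduces to evaluating that closed-form expression on the specific impulse input (\ref{equation_lemma1_input_squence}) with $\textbf{x}_0=\textbf{0}$, so no new machinery is needed; the work is purely bookkeeping of the summation indices and the two boundary cases $n=0$ and $1\le n\le 2N$.

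First I would set $\textbf{x}_0=\textbf{0}$ in (\ref{general_sol}), which kills the homogeneous term $\textbf{C}_d\textbf{A}^n\textbf{x}_0$ for every $n$. Then I would substitute the input, using the fact that only $\textbf{u}_i[0]=\textbf{e}_i$ is nonzero while $\textbf{u}_i[k]=\textbf{0}$ for $1\le k\le 2N$. For $n=0$ the convolution sum $\sum_{k=0}^{n-1}(\cdot)$ is empty, so $\textbf{y}_{d,i}[0]=\textbf{D}_d\textbf{u}_i[0]=\textbf{D}_d\textbf{e}_i$. For $1\le n\le 2N$ the feedthrough term $\textbf{D}_d\textbf{u}_i[n]$ vanishes, and in the convolution sum every summand with $k\ge 1$ vanishes because $\textbf{u}_i[k]=\textbf{0}$; only the $k=0$ term survives, giving $\textbf{y}_{d,i}[n]=\textbf{C}_d\textbf{A}^{n-1}\textbf{B}\textbf{e}_i$. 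This establishes the first displayed formula of the lemma.

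Next I would assemble the matrices $\textbf{M}_d[n]=[\textbf{y}_{d,1}[n],\dots,\textbf{y}_{d,m}[n]]$ by stacking these columns. Since $[\textbf{e}_1,\dots,\textbf{e}_m]=\textbf{I}$, factoring the common left matrix out of the columns yields $\textbf{M}_d[0]=\textbf{D}_d[\textbf{e}_1,\dots,\textbf{e}_m]=\textbf{D}_d$ and $\textbf{M}_d[n]=\textbf{C}_d\textbf{A}^{n-1}\textbf{B}[\textbf{e}_1,\dots,\textbf{e}_m]=\textbf{C}_d\textbf{A}^{n-1}\textbf{B}$ for $1\le n\le 2N$, which are exactly the identities in (\ref{Markov}).

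There is no genuine obstacle here; the only points demanding care are the boundary case $n=0$, where the empty convolution sum must be interpreted correctly, and the observation that collecting the impulse responses for the $m$ basis inputs into columns is literally right-multiplication by the identity, so that the per-input formula promotes to the claimed matrix identity. The one convention I would double-check is the dimension and indexing of $\textbf{e}_i$, so that $[\textbf{e}_1,\dots,\textbf{e}_m]$ really is the $m\times m$ identity matching the column count of $\textbf{B}$.
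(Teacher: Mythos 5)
Your proof is correct and follows exactly the paper's approach: the paper likewise proves the lemma by substituting the impulse input into the general solution (\ref{general_sol}) and using $\textbf{x}_0=\textbf{0}$, only stating this in one line where you carry out the bookkeeping explicitly. Nothing is missing.
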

 \begin{proof}
 The proof for this lemma follows immediately by substituting the input sequence from (\ref{equation_lemma1_input_squence}) into the general solution given in (\ref{general_sol}) and using the fact that the initial state $\textbf{x}_0$ is zero.
  \end{proof}
 
 The above matrices $\left\{\textbf{M}_d[n]\right\}$ are usually called the \textit{Markov parameters} of an LTI system.
To continue, we state the Cayley-Hamilton\cite[p.~284]{CITE_Linear_Algebra} Theorem, since it plays an important role in our derivation. 
\begin{theorem}[Cayley-Hamilton Theorem]
For a given $n\times n$ matrix $\textbf{A}$ over the field $\mathbb{F}$, let $P_{\textbf{A}}(t)=\det(t\textbf{I}-\textbf{A})$ be the characteristic polynomial of $\textbf{A}$. Let $\{a_k\}_{k=0}^{n-1}$ be the coefficients of $P_{\textbf{A}}(t)$, so that it can be represented as
\begin{align}
P_{\textbf{A}}(t)=t^n+\sum_{k=0}^{n-1} a_kt^k.
\end{align}
 Then the following holds:
\begin{align}
P_{\textbf{A}}(\textbf{A})=\textbf{A}^n+\sum_{k=0}^{n-1} a_k\textbf{A}^k=\textbf{O},
\end{align}
where $\textbf{O}$ is the zero $n\times n$ matrix.
 \label{theorem_Cayley_Hamilton}
\end{theorem}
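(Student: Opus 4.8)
The plan is to avoid the tempting but invalid shortcut of ``substituting $t=\textbf{A}$'' directly into the scalar identity $P_{\textbf{A}}(t)=\det(t\textbf{I}-\textbf{A})$: here $t$ is a scalar indeterminate whereas $\textbf{A}$ is a matrix, so the determinant expression is not a literal polynomial identity that one may evaluate at a matrix argument. Instead I would work entirely with the adjugate identity, which holds over any commutative ring and in particular over $\mathbb{F}[t]$. The starting point is
\begin{align}
adj(t\textbf{I}-\textbf{A})\cdot(t\textbf{I}-\textbf{A})=\det(t\textbf{I}-\textbf{A})\,\textbf{I}=P_{\textbf{A}}(t)\,\textbf{I}. \nonumber
\end{align}
Each entry of $adj(t\textbf{I}-\textbf{A})$ is a cofactor of $t\textbf{I}-\textbf{A}$, i.e. the determinant of an $(n-1)\times(n-1)$ submatrix, hence a polynomial in $t$ of degree at most $n-1$. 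This lets me collect powers of $t$ and write $adj(t\textbf{I}-\textbf{A})=\sum_{k=0}^{n-1}\textbf{B}_k t^k$ for suitable matrices $\textbf{B}_k$ with entries in $\mathbb{F}$.

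Next I would expand the product $\left(\sum_{k=0}^{n-1}\textbf{B}_k t^k\right)(t\textbf{I}-\textbf{A})$ and equate the matrix coefficient of each power $t^j$ with the corresponding coefficient $a_j\textbf{I}$ on the right-hand side, writing the leading coefficient as $a_n=1$. This yields the system
\begin{align}
\textbf{B}_{n-1}&=\textbf{I}, \nonumber \\
\textbf{B}_{j-1}-\textbf{B}_j\textbf{A}&=a_j\textbf{I}, \qquad 1\le j\le n-1, \nonumber \\
-\textbf{B}_0\textbf{A}&=a_0\textbf{I}. \nonumber
\end{align}
Each of these is a genuine equation between matrices over $\mathbb{F}$, so I am free to multiply both sides on the right by powers of $\textbf{A}$.

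The decisive step is to right-multiply the equation for the coefficient of $t^j$ by $\textbf{A}^j$ and sum over $j=0,1,\dots,n$. On the right-hand side this produces exactly $\sum_{j=0}^{n}a_j\textbf{A}^j=P_{\textbf{A}}(\textbf{A})$. On the left-hand side the terms telescope: the term $\textbf{B}_{j-1}\textbf{A}^{j}$ produced with a plus sign by the equation for $t^{j}$ is cancelled by the same term $\textbf{B}_{j-1}\textbf{A}^{j}$ produced with a minus sign by the equation for $t^{j-1}$, so the entire left-hand side collapses to $\textbf{O}$. Comparing the two sides gives $P_{\textbf{A}}(\textbf{A})=\textbf{O}$, as claimed.

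I do not expect a serious obstacle here; once the adjugate identity and the degree bound on its entries are in hand, the argument is essentially bookkeeping. The one point requiring genuine care — and the reason the naive substitution fails — is that the coefficients $\textbf{B}_k$ need not commute with $\textbf{A}$, so the right-multiplication by $\textbf{A}^j$ and the telescoping must be carried out consistently on a single, fixed side throughout; this is what makes the cancellation valid over the non-commutative matrix ring even though the underlying polynomial identity lives over the commutative ring $\mathbb{F}[t]$. Finally, since $\mathbb{F}$ is only assumed to be a field (indeed a finite field in the paper's setting), it is worth emphasizing that this adjugate-based argument uses no special structure of $\mathbb{F}$: in particular it avoids any appeal to eigenvalues or diagonalizability, which would implicitly require an algebraically closed field and would not suffice for the finite fields we work with.
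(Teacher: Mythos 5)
Your proof is correct: the adjugate identity over $\mathbb{F}[t]$, the degree bound on the cofactors, the coefficient-matching system, and the telescoping sum after right-multiplication by $\textbf{A}^j$ are all carried out properly, and the argument indeed works over an arbitrary field, including the finite fields used in the paper. Note, however, that the paper does not prove this theorem at all — it is quoted as a standard result with a citation to a linear algebra text — so there is no in-paper proof to compare against; yours is the classical adjugate proof one would find in such a reference. One small factual quibble with your closing remark: the matrices $\textbf{B}_k$ in fact do commute with $\textbf{A}$ (since $adj(t\textbf{I}-\textbf{A})$ commutes with $t\textbf{I}-\textbf{A}$, comparing coefficients shows each $\textbf{B}_k$ commutes with $\textbf{A}$), though your proof rightly does not rely on this, and keeping all multiplications on a single fixed side is the safe way to present the cancellation.
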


We now look for a non zero polynomial $P_d(t)=\sum_{k=0}^{N} \alpha_{d,k}t^k$ that will satisfy 
\begin{align}
\textbf{C}_dP_d(\textbf{A})\textbf{A}^\tau \textbf{B}=\textbf{O}, \hspace{5 mm} \forall \tau\in\mathbb{N}. \label{Inf_Mamal}
\end{align}
We will show later that this polynomial is used in the difference equation (\ref{diffeq}), which is needed for decoding the transmitted symbols.
The set of linear equations given in (\ref{Inf_Mamal}) has an infinite number of equations; using the Cayley-Hamilton Theorem it has at least one solution, where $P_d(t)$ is the characteristic polynomial of $\textbf{A}$.
It is interesting to note that to find $P_d(t)$, we do not need all of these equations since they are linearly dependent. In fact, we have the following lemma that tells us how many equations we need.
\begin{lemma} \label{lemma_infinite-finite_mamal}
If a polynomial $P_d(t)$ satisfies 
\begin{align}
\textbf{C}_dP_d(\textbf{A})\textbf{A}^\tau \textbf{B}=\textbf{O}, \hspace{5 mm} \forall \tau\in\{0,1,...,N-1\}, \label{Fin_Mamal}
\end{align}
 where $\textbf{A}$ is a square $N\times N$ matrix, then it also satisfies (\ref{Inf_Mamal}).
 \label{lemma_Infinite_to_Finite_equations}
\end{lemma}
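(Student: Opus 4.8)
The plan is to reduce the infinite family of conditions in (\ref{Inf_Mamal}) to the finite family in (\ref{Fin_Mamal}) by exploiting the fact that $\textbf{A}$ satisfies a linear recurrence of order $N$, courtesy of the Cayley-Hamilton Theorem (Theorem \ref{theorem_Cayley_Hamilton}). To streamline the notation I would first set $\textbf{Q}=\textbf{C}_dP_d(\textbf{A})$, so that both (\ref{Fin_Mamal}) and (\ref{Inf_Mamal}) collapse to the single assertion $\textbf{Q}\textbf{A}^\tau\textbf{B}=\textbf{O}$, with the hypothesis supplying this for $\tau\in\{0,1,\dots,N-1\}$ and the goal being to establish it for every $\tau\in\mathbb{N}$.

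The key observation is that, since $\textbf{A}$ is $N\times N$, its characteristic polynomial $P_{\textbf{A}}(t)=t^N+\sum_{k=0}^{N-1}a_kt^k$ has degree $N$, so Cayley-Hamilton yields $\textbf{A}^N=-\sum_{k=0}^{N-1}a_k\textbf{A}^k$. Multiplying this identity on the left by $\textbf{A}^{\tau-N}$ gives the recurrence
\begin{align}
\textbf{A}^\tau=-\sum_{k=0}^{N-1}a_k\textbf{A}^{\tau-N+k},\hspace{5 mm}\forall\tau\geq N, \nonumber
\end{align}
which expresses every high power of $\textbf{A}$ in terms of the $N$ immediately preceding ones.

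From here I would proceed by strong induction on $\tau$. The cases $\tau=0,1,\dots,N-1$ are precisely the hypothesis (\ref{Fin_Mamal}). For the inductive step, fix $\tau\geq N$ and assume $\textbf{Q}\textbf{A}^j\textbf{B}=\textbf{O}$ for all $j<\tau$. Multiplying the recurrence above on the left by $\textbf{Q}$ and on the right by $\textbf{B}$ gives
\begin{align}
\textbf{Q}\textbf{A}^\tau\textbf{B}=-\sum_{k=0}^{N-1}a_k\,\textbf{Q}\textbf{A}^{\tau-N+k}\textbf{B}. \nonumber
\end{align}
Each exponent $\tau-N+k$ ranges over $\{\tau-N,\dots,\tau-1\}$, all strictly less than $\tau$, so every term on the right vanishes by the induction hypothesis, forcing $\textbf{Q}\textbf{A}^\tau\textbf{B}=\textbf{O}$. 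This closes the induction and establishes (\ref{Inf_Mamal}).

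There is no serious obstacle here: the entire argument is the standard fact that the Cayley-Hamilton recurrence lets the first $N$ members of the sequence $\{\textbf{Q}\textbf{A}^\tau\textbf{B}\}_\tau$ control all the rest. The only point needing a little care is the bookkeeping of the index range $\tau-N+k$ in the inductive step, to confirm it stays below $\tau$ so that the induction hypothesis genuinely applies; since $0\leq k\leq N-1$ this range is exactly $\{\tau-N,\dots,\tau-1\}$, which is safe precisely because $\tau\geq N$.
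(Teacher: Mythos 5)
Your proof is correct and rests on exactly the same idea as the paper's: the Cayley--Hamilton recurrence lets the first $N$ matrices $\textbf{C}_dP_d(\textbf{A})\textbf{A}^\tau\textbf{B}$ control all the rest. The only cosmetic difference is that you run the strong induction explicitly (reducing $\tau$ to the window $\{\tau-N,\dots,\tau-1\}$), while the paper invokes the equivalent consequence that every $\textbf{A}^\tau$ with $\tau\geq N$ lies in the span of $\textbf{I},\textbf{A},\dots,\textbf{A}^{N-1}$ and reduces to the base cases in one step.
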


The proofs for this lemma and for those of all the other  theorems are given in Appendix B.
If we denote the unknown polynomial by $P_d(t)=\sum_{k=0}^{N}\alpha_{d,k}z^k$, then the set of linear equations given in (\ref{Fin_Mamal}) can be written as
\begin{align}
\forall \tau\in\{1,...,N\}: \hspace{2 mm} \textbf{O}&=\textbf{C}_dP_d(\textbf{A})\textbf{A}^{\tau-1} \textbf{B} \\
&=\sum_{k=0}^{N}\alpha_{d,k}\textbf{C}_d\textbf{A}^{k+\tau-1}\textbf{B} \\
&\stackrel{(a)}{=}\sum_{k=0}^{N}\alpha_{d,k}\textbf{M}_d[k+\tau], \label{equation_derivation1_explanation_FinMamal}
\end{align}
where (a) is obtained from Lemma \ref{lemma_markov_parameters}. 
We therefore see that the set of linear equations solved in (\ref{mamal}) is the same set as in (\ref{Fin_Mamal}). 
The next theorem describes the relationship between the sent and received symbols in the network.
\begin{theorem} \label{theorem_decoding}
For a given network and a sink node $d$, let $P_d(z)=\sum_{k=0}^N \alpha_{d,k} z^k$ and $\textbf{G}_d(z)$ be the polynomial and the matrix defined in (\ref{Polynom})-(\ref{TransMatrix}). Then (\ref{diffeq}) holds. 
Furthermore, it is possible to decode  $\textbf{u}$ from $\textbf{y}_d$ if and only if the matrix $\textbf{G}_d(z)$ is of full column rank over the polynomial ring $\mathbb{F}[z]$. 
\end{theorem}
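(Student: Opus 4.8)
The plan is to treat the two assertions separately: establish the difference equation (\ref{diffeq}) by a direct operator computation, and then reduce the decodability claim to a full-column-rank condition on an associated transfer matrix. For (\ref{diffeq}) I would begin from the explicit solution (\ref{general_sol}) with $\textbf{x}_0=\textbf{0}$, which together with \lemref{lemma_markov_parameters} gives the causal convolution $\textbf{y}_d[n]=\sum_{i=0}^{n}\textbf{M}_d[n-i]\textbf{u}[i]$. Applying the operator $P_d(z)=\sum_{k=0}^{N}\alpha_{d,k}z^{k}$ yields $(P_d(z)\textbf{y}_d)[n]=\sum_{k=0}^{N}\alpha_{d,k}\textbf{y}_d[n+k]$, and the key move is to split the inputs $\textbf{u}[i]$ appearing here into a past block ($0\le i\le n-1$) and a present/future block ($n\le i\le n+k$). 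Summing the past block over $k$ first collapses the coefficient of each $\textbf{u}[i]$ into $\textbf{C}_d P_d(\textbf{A})\textbf{A}^{\,n-1-i}\textbf{B}$, which vanishes for every $i\le n-1$ by (\ref{Inf_Mamal}); this is exactly where \lemref{lemma_Infinite_to_Finite_equations} does the work, annihilating the infinite-duration tail of the impulse response. Re-indexing the surviving block and separating off the feedthrough term $\textbf{M}_d[0]=\textbf{D}_d$ then reproduces, coefficient by coefficient, the matrix $\textbf{G}_d(z)$ of (\ref{TransMatrix}), so that $(P_d(z)\textbf{y}_d)[n]=(\textbf{G}_d(z)\textbf{u})[n]$ for all $n\ge 0$. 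This step is routine, if slightly tedious, index bookkeeping.

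For the decodability claim I would pass to the transfer-matrix picture. Writing the impulse response as a formal series $\textbf{H}_d=\sum_{j\ge 0}\textbf{M}_d[j]\,z^{-j}$ in the delay operator, the causal convolution above reads $\textbf{y}_d=\textbf{H}_d\textbf{u}$, so decodability is precisely the injectivity of the map $\textbf{u}\mapsto\textbf{H}_d\textbf{u}$ on causal sequences, i.e. the full column rank of $\textbf{H}_d$ over the field of rational functions. Substituting $\textbf{y}_d=\textbf{H}_d\textbf{u}$ into (\ref{diffeq}) shows $P_d(z)\textbf{H}_d=\textbf{G}_d(z)$ as an identity of rational matrices; since $P_d\not\equiv 0$ enters only as a scalar factor, $\textbf{H}_d$ and $\textbf{G}_d(z)$ have the same column rank, and for a polynomial matrix full column rank over the rational-function field is equivalent to the nonexistence of a nonzero polynomial vector in its kernel, that is, to full column rank over $\mathbb{F}[z]$.

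It then remains to prove the two implications. For full rank $\Rightarrow$ decodable, I would use that a full-column-rank polynomial matrix admits a polynomial left inverse up to a nonzero scalar polynomial $q(z)$ (via the adjugate of a nonsingular maximal minor, or the Smith form); left-multiplying (\ref{diffeq}) gives a difference equation $q(z)\textbf{u}=\textbf{F}(z)P_d(z)\textbf{y}_d$ that recovers $\textbf{u}$ from $\textbf{y}_d$ with finite delay, which is exactly the sequential decoder of \cite{CITE_Efficient_code_design}. For the converse I would argue by contraposition: a nonzero kernel vector $\textbf{v}(z)$ of $\textbf{G}_d(z)$ produces two distinct causal inputs with identical output, so decoding is impossible. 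The main obstacle, and the place I would spend the most care, is this equivalence between decodability and the rank condition: one must reconcile the forward-shift convention of $z$ used in (\ref{diffeq}) with the delay interpretation needed to speak of causal injectivity, and in the converse one must construct the ambiguous inputs so that they respect causality and the zero initial conditions $\textbf{u}[n]=\textbf{0}$ for $n<0$. The operator identity of the first part, by contrast, is purely mechanical once the past/future split is set up.
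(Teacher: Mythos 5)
Your proposal is correct and follows essentially the same route as the paper: the same past/future split of the convolution sum annihilated by $\textbf{C}_dP_d(\textbf{A})\textbf{A}^{\tau}\textbf{B}=\textbf{O}$ to obtain (\ref{diffeq}), the same adjugate-based left inverse yielding a scalar-polynomial recursion for the decodable direction, and the same kernel-vector construction of a causal ambiguous input for the converse. The only difference is your intermediate detour through the formal transfer series $\textbf{H}_d$, which the paper skips by arguing directly on sequences, but this does not change the substance of the argument.
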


Theorem \ref{theorem_decoding} gives us a way to decode the sent symbols, and it assures us that if the set of linear equations in (\ref{diffeq}) does not have a unique solution, then there is no way for us to find $\textbf{u}$ from $\textbf{y}_d$, even if we know the network topology and the local encoding kernels.

\section{Derivation of Algorithm 2}
We are interested, again, in a difference equation between $\textbf{u}$ and $\textbf{y}_d$, as given in (\ref{diffeq}), that does not depend on $\textbf{x}_0$. Let $\{\textbf{e}_k\}_{k=1}^m$ be the standard basis for the vector space $\mathbb{F}^m$, namely, the elements of the vector $\textbf{e}_k$ are zeros except for the $k$'th element which is equal to one. As described in step 1 of Algorithm 2,  the input sequence $\textbf{u}[n]$ is given by
\begin{align}
\textbf{u}[n]=\sum_{k=1}^{m} \textbf{e}_k1_{\{n=(2N+1)k\}}, \label{input_sequence_no_reset}
\end{align}
where $1_{\{\cdot\}}$ is the indicator function
\begin{align}
1_{\Omega}=\begin{cases}
          1,& \text{statement $\Omega$ is true} \\
          0,& \text{otherwise}
      \end{cases}. \nonumber
\end{align}
We can get the output sequence
if we substitute the above input sequence into the general solution (\ref{general_sol}) of the network's state equations.
The output sequence $\textbf{y}_d[n]$ for $n\geq 0$ will be
\begin{align}
\textbf{y}_d[n]&=\textbf{C}_d\textbf{A}^{n}\textbf{x}_0  +\sum_{k=1}^{m}\sum_{i=0}^{n-1}\textbf{C}_d\textbf{A}^{n-1-i}\textbf{B}\textbf{e}_k1_{\{i=(2N+1)k\}}+\textbf{D}_d\sum_{k=1}^{m}\textbf{e}_k1_{\{n=(2N+1)k\}} \\
&=\textbf{C}_d\textbf{A}^{n}\textbf{x}_0  +\sum_{k=1}^{\min\left\{m,\floor{\frac{n-1}{2N+1}}\right\}}\textbf{C}_d\textbf{A}^{n-1-(2N+1)k}\textbf{B}\textbf{e}_k+\textbf{D}_d\textbf{e}_{n/(2N+1)}1_{\{n/(2N+1)\in\mathbb{N}\}}. \label{output_sequence_no_reset}
\end{align}
We look for a non zero polynomial $P_d(t)$ that will satisfy 
\begin{align}
\textbf{C}_dP_d(\textbf{A})\textbf{A}^\tau\textbf{B}&=\textbf{O}, \hspace{5 mm} \forall \tau\geq 0, \label{InfMamalNew1}\\
\textbf{C}_dP_d(\textbf{A})\textbf{A}^{\tau+1}\textbf{x}_0&=\textbf{0}, \hspace{5 mm} \forall \tau\geq 0, \label{InfMamalNew2}
\end{align}
where $\textbf{O}$ is the $l_d\times m$ zero matrix and $\textbf{0}$ is the zero column vector of dimension $l_d$.
This polynomial is used in the difference equation (\ref{diffeq}), which is needed for decoding the transmitted symbols.
We can limit ourselves to $\tau\in\{0,...,N-1\}$, as can be seen by the next lemma.
\begin{lemma}
Let $P_d(t)$ be a polynomial in $t$ and $\textbf{A}$ is a square $N\times N$ matrix. If either of the equations (\ref{InfMamalNew1}) or (\ref{InfMamalNew2}) hold for $\tau\in\{0,...,N-1\}$, then it also holds for all $\tau\geq N$. \label{lemma_infinite-finite_mamal2}
\end{lemma}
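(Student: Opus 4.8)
The plan is to reduce both assertions to the Cayley--Hamilton Theorem (Theorem \ref{theorem_Cayley_Hamilton}), which converts the powers of $\textbf{A}$ into a fixed linear recurrence of order $N$ and thereby lets a finite block of vanishing equations propagate to all higher indices. I would first note that the statement for (\ref{InfMamalNew1}) is literally Lemma \ref{lemma_infinite-finite_mamal} (applied with the $N\times N$ matrix $\textbf{A}$), so no new work is needed there; what remains is the case (\ref{InfMamalNew2}), and in fact the argument is uniform across the two. Write the characteristic polynomial as $P_{\textbf{A}}(t)=t^N+\sum_{k=0}^{N-1}a_k t^k$. By Cayley--Hamilton, $\textbf{A}^N=-\sum_{k=0}^{N-1}a_k\textbf{A}^k$, and multiplying by $\textbf{A}^n$ yields the recursion $\textbf{A}^{n+N}=-\sum_{k=0}^{N-1}a_k\textbf{A}^{n+k}$, valid for every $n\geq 0$.

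Next, with $M:=P_d(\textbf{A})$ held fixed, I would introduce the matrix-valued sequence $\textbf{F}[\tau]:=\textbf{C}_d M\textbf{A}^{\tau}\textbf{B}$ for the first case and the vector-valued sequence $\textbf{f}[\tau]:=\textbf{C}_d M\textbf{A}^{\tau}\textbf{x}_0$ for the second (viewing $\textbf{x}_0$ as an $N\times 1$ matrix, so the two cases differ only in the rightmost factor). Multiplying the recursion above on the left by $\textbf{C}_d M$ and on the right by $\textbf{B}$ (respectively $\textbf{x}_0$) shows that each sequence obeys the same order-$N$ linear recurrence $\textbf{F}[n+N]=-\sum_{k=0}^{N-1}a_k\textbf{F}[n+k]$ for all $n\geq 0$.

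The proof then closes by induction. For (\ref{InfMamalNew1}), the hypothesis supplies the $N$ initial zeros $\textbf{F}[0]=\cdots=\textbf{F}[N-1]=\textbf{O}$, and the recurrence forces $\textbf{F}[\tau]=\textbf{O}$ for every $\tau\geq N$. For (\ref{InfMamalNew2}), the offset in $\textbf{A}^{\tau+1}$ means that the hypothesis over $\tau\in\{0,\dots,N-1\}$ instead supplies $\textbf{f}[1]=\cdots=\textbf{f}[N]=\textbf{0}$; since these are still $N$ consecutive values, the same recurrence propagates the zeros to all indices $\geq N+1$, which is exactly (\ref{InfMamalNew2}) for $\tau\geq N$.

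I do not expect a genuine obstacle here, as the whole argument is a direct consequence of Cayley--Hamilton. The only point requiring care is the index bookkeeping in the $\textbf{x}_0$ case: because of the shift by one, the block of already-known-zero terms is $\{\textbf{f}[1],\dots,\textbf{f}[N]\}$ rather than $\{\textbf{f}[0],\dots,\textbf{f}[N-1]\}$, but since these are again $N$ consecutive values the order-$N$ recurrence still closes and the induction goes through without change.
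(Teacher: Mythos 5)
Your proof is correct and follows essentially the same route as the paper, which omits the proof of this lemma as being analogous to that of Lemma \ref{lemma_infinite-finite_mamal}: both arguments rest on the Cayley--Hamilton Theorem to express powers $\textbf{A}^{\tau}$ with $\tau\geq N$ in terms of $\textbf{I},\textbf{A},\dots,\textbf{A}^{N-1}$, so that the $N$ given vanishing equations force all higher ones to vanish. Your explicit handling of the index shift in the $\textbf{x}_0$ case (the known-zero block being $\textbf{f}[1],\dots,\textbf{f}[N]$) is a correct and worthwhile detail that the paper leaves implicit.
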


The proof is similar to the proof for Lemma \ref{lemma_infinite-finite_mamal} and is therefore omitted. A method for finding such a polynomial from the received sequence $\left\{\textbf{y}_d[n]\right\}_{1\leq n \textless (m+1)(2N+1)}$ is given in the next theorem.
\begin{theorem} \label{theorem_alg2_polynomial_mamal}
The polynomial $P_d(t)=\sum_{k=0}^{N}\alpha_{d,k}t^k$ satisfies (\ref{InfMamalNew1})-(\ref{InfMamalNew2}) if and only if its coefficients are a solution of (\ref{mamalNew}).      
\end{theorem}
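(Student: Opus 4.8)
The plan is to establish the biconditional in Theorem~\ref{theorem_alg2_polynomial_mamal} by showing that the linear system~(\ref{mamalNew}) solved in Algorithm 2 is, coefficient for coefficient, exactly the same system as the finite versions of~(\ref{InfMamalNew1}) and~(\ref{InfMamalNew2}) restricted to $\tau\in\{0,\dots,N-1\}$. Since Lemma~\ref{lemma_infinite-finite_mamal2} already guarantees that satisfying these two equations for $\tau\in\{0,\dots,N-1\}$ is equivalent to satisfying them for all $\tau\geq 0$, it suffices to prove the equivalence between~(\ref{mamalNew}) and the finite systems. The strategy is therefore to substitute the explicit output expression~(\ref{output_sequence_no_reset}) into the left-hand side of~(\ref{mamalNew}) and to read off which of the terms~(\ref{InfMamalNew1}) and~(\ref{InfMamalNew2}) each block of $\tau$-values isolates.

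First I would fix $p\in\{0,1,\dots,m\}$ and $\tilde\tau\in\{1,\dots,N\}$, so that the time index in~(\ref{mamalNew}) is $\tau=(2N+1)p+\tilde\tau$, and then compute $\sum_{j=0}^N \alpha_{d,j}\,\textbf{y}_d[j+\tau]$ using~(\ref{output_sequence_no_reset}). The key observation is that the index $j+\tau$ ranges over $[\tau,\tau+N]\subseteq[(2N+1)p+1,\,(2N+1)p+2N]$, a window of length $N+1<2N+1$ strictly between two consecutive multiples of $2N+1$. Consequently the Kronecker/indicator term $\textbf{D}_d\textbf{e}_{n/(2N+1)}1_{\{n/(2N+1)\in\mathbb{N}\}}$ never fires inside this window, so the $\textbf{D}_d$ contribution drops out entirely, and the upper summation limit $\min\{m,\lfloor(n-1)/(2N+1)\rfloor\}$ equals exactly $p$ throughout the window. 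After substitution the expression collapses, via the substitution $n=j+\tau$ and the identity $\textbf{M}_d[\,\cdot\,]=\textbf{C}_d\textbf{A}^{\,\cdot\,-1}\textbf{B}$, into a sum of an $\textbf{x}_0$-term of the form $\textbf{C}_d P_d(\textbf{A})\textbf{A}^{(2N+1)p+\tilde\tau}\textbf{x}_0$ together with $p$ many $\textbf{B}$-terms of the form $\textbf{C}_d P_d(\textbf{A})\textbf{A}^{(2N+1)(p-k)+\tilde\tau-1}\textbf{B}\textbf{e}_k$ for $k=1,\dots,p$.

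The heart of the argument is then an induction on $p$ to decouple these mixed equations. For $p=0$ the window contains no $\textbf{B}$-term at all, so~(\ref{mamalNew}) reduces exactly to $\textbf{C}_d P_d(\textbf{A})\textbf{A}^{\tilde\tau}\textbf{x}_0=\textbf{0}$ for $\tilde\tau\in\{1,\dots,N\}$, which is precisely~(\ref{InfMamalNew2}) over $\tau\in\{0,\dots,N-1\}$ (after reindexing $\tau=\tilde\tau-1$). Assuming the equations for all smaller values of $p$ have already been shown equivalent to~(\ref{InfMamalNew1})--(\ref{InfMamalNew2}), the $\textbf{x}_0$-term and all but the $k=p$ summand of the $p$-block vanish by the inductive hypothesis, leaving exactly $\textbf{C}_d P_d(\textbf{A})\textbf{A}^{\tilde\tau-1}\textbf{B}\textbf{e}_p=\textbf{0}$; letting $\tilde\tau$ range over $\{1,\dots,N\}$ and $p$ over $\{1,\dots,m\}$ recovers~(\ref{InfMamalNew1}) column by column (the $\textbf{e}_p$ sweep out all $m$ columns of $\textbf{B}$). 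Running this forward gives one direction and reversing it gives the converse, so the two systems have identical solution sets.

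\textbf{The main obstacle} I anticipate is purely bookkeeping rather than conceptual: one must verify carefully that the summation window never straddles a multiple of $2N+1$ (this is exactly why the pilot spacing $2N+1 > N+1$ was chosen), and one must track the floor function and the upper summation limit in~(\ref{output_sequence_no_reset}) so that the count of surviving $\textbf{B}$-terms is exactly $p$ and not off by one. The cleanest way to manage this is to separate the single $\textbf{x}_0$-contribution from the finite sum of $\textbf{B}\textbf{e}_k$-contributions at the outset and to organize the induction so that each value of $p$ peels off exactly one new column-equation for $\textbf{B}$ while re-using the already-established vanishing of all earlier terms.
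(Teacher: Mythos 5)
Your proposal is correct and follows essentially the same route as the paper's proof: substituting the explicit output expression~(\ref{output_sequence_no_reset}) into~(\ref{mamalNew}), noting that each window of $\tau$-values avoids multiples of $2N+1$ (so the $\textbf{D}_d$ term drops and the floor equals $p$), invoking Lemma~\ref{lemma_infinite-finite_mamal2} to restrict to finitely many $\tau$, and inducting on the pilot-block index to peel off one column equation of~(\ref{InfMamalNew1}) at a time after the $p=0$ block yields~(\ref{InfMamalNew2}). The only cosmetic difference is that you package both directions as a single triangular equivalence of linear systems, whereas the paper proves the two implications separately.
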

Using the Cayley-Hamilton Theorem, there is at least one polynomial that satisfies (\ref{InfMamalNew1})-(\ref{InfMamalNew2}), which is the characteristic polynomial of $\textbf{A}$, so (\ref{mamalNew}) has at least one solution. After finding a polynomial $P_d(t)$, we can construct a difference equation  for $\textbf{u}$ and $\textbf{y}_d$ that does not depend on the initial state $\textbf{x}_0$. The equation will hold for any time after the initialization process  finishes, even without resetting the state vector.
\begin{theorem}
Let $P_d(z)$ and $\textbf{G}_d(z)$ be the polynomial and the matrix defined in (\ref{Polynom2})-(\ref{TransMatrixNew1})
 Then the following difference equation holds:
\begin{align}
\left(P_d(z)\textbf{y}_d\right)[n] = \left(\textbf{G}_d(z)\textbf{u}\right)[n], \hspace{3 mm} \forall n\geq 1. \label{equation_theorem_main2}
\end{align} \label{theorem_main2}
\end{theorem}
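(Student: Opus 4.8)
The plan is to compute both sides of (\ref{equation_theorem_main2}) explicitly for an \emph{arbitrary} input sequence $\textbf{u}$ and show that they coincide; the essential point is that the stored pilot measurements defining $\textbf{G}_d(z)$ reproduce exactly the filtered Markov parameters that emerge when $P_d(z)$ is applied to the true output.

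First I would treat the left-hand side. Applying $P_d(z)=\sum_{k=0}^N\alpha_{d,k}z^k$ to the general solution (\ref{general_sol}) splits $\left(P_d(z)\textbf{y}_d\right)[n]$ into an initial-state part, a convolution part and a feedthrough part. The initial-state part equals $\textbf{C}_dP_d(\textbf{A})\textbf{A}^n\textbf{x}_0$, which for $n\geq 1$ vanishes by (\ref{InfMamalNew2}) with $\tau=n-1$. In the convolution part I would interchange the order of summation and collect, for each input instant $\textbf{u}[i]$, the matrix coefficient $\sum_{l}\alpha_{d,l}\textbf{C}_d\textbf{A}^{n+l-1-i}\textbf{B}$; whenever $i\leq n-1$ the full range $l\in\{0,\dots,N\}$ is present, so this coefficient is $\textbf{C}_dP_d(\textbf{A})\textbf{A}^{n-1-i}\textbf{B}=\textbf{O}$ by (\ref{InfMamalNew1}). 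Hence only the look-ahead window $i\in\{n,\dots,n+N-1\}$ survives, and after reindexing $i=n+p$ and merging the feedthrough terms the left-hand side becomes the finite combination $\sum_{p=0}^{N}\big[\sum_{l=p+1}^N\alpha_{d,l}\textbf{C}_d\textbf{A}^{l-1-p}\textbf{B}+\alpha_{d,p}\textbf{D}_d\big]\textbf{u}[n+p]$, where the empty inner sum at $p=N$ leaves just $\alpha_{d,N}\textbf{D}_d$.

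Next I would expand the right-hand side $\left(\textbf{G}_d(z)\textbf{u}\right)[n]=\sum_{i=1}^m\left(\textbf{g}_{d,i}(z)u_i\right)[n]$ from (\ref{TransMatrixNew2})-(\ref{TransMatrixNew1}) and substitute the pilot output (\ref{output_sequence_no_reset}) into the stored values. Writing $M=2N+1$, the coefficient of $u_i[n+p]$ (the term $k=p+1$) is $\sum_{j=0}^N\alpha_{d,j}\textbf{y}_d[j+Mi-p]$, and here the spacing $M$ does the real work. The integer $j-p$ ranges over $\{-N,\dots,N\}$, whose only multiple of $M$ is $0$, so the feedthrough term of the pilot output contributes exactly $\alpha_{d,p}\textbf{D}_d\textbf{e}_i$ (from $j=p$); the initial-state term becomes $\textbf{C}_dP_d(\textbf{A})\textbf{A}^{Mi-p}\textbf{x}_0$, which dies by (\ref{InfMamalNew2}) because $Mi-p-1\geq N\geq 0$; and in the convolution term the floor function in (\ref{output_sequence_no_reset}) evaluates to $i$ or $i-1$ according to the sign of $j-p$, so the only surviving diagonal contribution ($r=i$) is $\sum_{l=p+1}^N\alpha_{d,l}\textbf{C}_d\textbf{A}^{l-1-p}\textbf{B}\textbf{e}_i$, while every off-diagonal term ($r<i$) carries $\textbf{A}$-exponent $M(i-r)-p-1\geq N\geq 0$ and is annihilated by (\ref{InfMamalNew1}). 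Collecting these three pieces shows the coefficient of $u_i[n+p]$ on the right equals the $\textbf{e}_i$-component of the bracket found on the left, for every $i\in\{1,\dots,m\}$ and every $p\in\{0,\dots,N\}$, which establishes (\ref{equation_theorem_main2}).

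I expect the main obstacle to be the index bookkeeping in this last step: one must verify that the floor-function ranges in (\ref{output_sequence_no_reset}) separate cleanly modulo $M=2N+1$ and, crucially, that every $\textbf{A}$-exponent needed to invoke (\ref{InfMamalNew1})-(\ref{InfMamalNew2}) is nonnegative. This is precisely where the spacing $2N+1$ (rather than a smaller gap) is essential: it guarantees $Mi-p-1\geq N\geq 0$ and $M(i-r)-p-1\geq N\geq 0$ for all offsets $p\leq N$ and all $r<i$, so that successive pilot impulses do not interfere and all cross terms are killed by the annihilator identities.
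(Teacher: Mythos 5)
Your proposal is correct and follows essentially the same route as the paper: both proofs substitute the pilot response (\ref{output_sequence_no_reset}) into the definition of $\textbf{g}_{d,i}(z)$, split it into initial-state, convolution and feedthrough parts, and use (\ref{InfMamalNew1})--(\ref{InfMamalNew2}) together with the $2N+1$ spacing to kill the cross terms, leaving exactly the filtered Markov parameters $\sum_{j=k}^N\alpha_{d,j}\textbf{C}_d\textbf{A}^{j-k}\textbf{B}$ plus $P_d(z)\textbf{D}_d$. The only organizational difference is that you re-derive the left-hand side directly from (\ref{general_sol}) (absorbing the nonzero initial state into one computation), whereas the paper cites Theorem \ref{theorem_decoding} for the zero-state part and handles $\textbf{x}_0\neq\textbf{0}$ by a ZIR/ZSR decomposition; the content is the same.
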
 

Equation (\ref{equation_theorem_main2}) itself is not enough for a decoding algorithm since it holds only for $n\geq 1$. In order to decode we need the first $N$ values of $\textbf{u}$: $\textbf{u}[0],...,\textbf{u}[N]$ to be known a priori to the sink nodes. Note that if we apply Algorithm 2 these values are zeros and, hence, are known a priori. We define:
\begin{align}
\textbf{q}[n]=\begin{cases}
          \left(\textbf{G}_d(z)\textbf{u}\right)[n],& n\leq 0 \\
          P_d(z)\textbf{y}_d(z),& n\geq 1
      \end{cases}.
\end{align}
Note that $\left(\textbf{G}_d(z)\textbf{u}\right)[n]$ can be calculated for $n\leq 0$ if we know  $\textbf{u}[0],...,\textbf{u}[N]$, since
\begin{align}
\left(\textbf{G}_d(z)\textbf{u}\right)[n]=\sum_{k=0}^{N}\textbf{G}_d[k]\textbf{u}[n+k].
\end{align}
Once we have the sequence $\textbf{q}$, we note that it satisfies
\begin{align}
\textbf{q}[n]=\left(\textbf{G}_d(z)\textbf{u}\right)[n], \hspace{3 mm} \forall n\in \mathbb{Z},
\end{align}
so we can use (\ref{diffeq_solved}-\ref{diffeq_solved2}) to find $\textbf{u}[n]$ (if $\textbf{G}_d(z)$ is of full column rank).

\section{Derivation of Algorithm 3}
 A direct consequence of Theorem \ref{theorem_decoding} is the fact that we can find achievable rates for every source node from the matrices $\left\{\textbf{G}_d(z)\right\}_{d\in\mathcal{D}}$. If the transmission rates are not achievable with the given local encoding kernels, then one cannot decode the input sequence $\textbf{u}$ from the output $\textbf{y}_d$.
 This result is stated in the following theorem.
 \begin{theorem}
 For a given network and a sink node $d$, let
 \begin{align}
 	P_d(z)\textbf{y}_d&=\textbf{G}_d(z)\textbf{u}=\sum_{s\in\mathcal{S}} \textbf{G}_{d,s}(z)\textbf{u}_s. \label{diffeq_many_sources}
  \end{align}
  describe the relationship between the input sequence $\textbf{u}$ and the output sequence $\textbf{y}_d$ that was found in Algorithm 1 or 2. 
 For every source node $s\in\mathcal{S}$, let $R_s$ be the transmission rate of $s$ that was set before the initialization algorithm was started.
 Then the rates $(R_s')_{s\in\mathcal{S}}$ are achievable for the sink node $d$ with the current local encoding kernels if and only if for every source node $s\in\mathcal{S}$ there exist $R_s'$ linearly independent column vectors $\textbf{v}_{s,1},...,\textbf{v}_{s,R_s}$ from the columns of the matrix $\textbf{G}_{d,s}(z)$, such that  $\cup_{s\in\mathcal{S}}\cup_{k=1}^{R_s'} \{\textbf{v}_{s,k}\}$ is a set of linearly independent vectors  over the polynomial ring $\mathbb{F}[z]$. \label{theorem_alg4}
 \end{theorem}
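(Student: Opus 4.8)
The plan is to reduce the whole statement to Theorem \ref{theorem_decoding} applied to the ``thinned'' system that results when each source zeros out some of its inputs. First I would make this reduction explicit. Fix the sink $d$ and suppose each source $s$ keeps active an index set $\mathcal{I}_s\subseteq\{1,\dots,R_s\}$ with $|\mathcal{I}_s|=R_s'$, sending zeros on the remaining $R_s-R_s'$ input sequences. The full difference equation (\ref{diffeq_many_sources}) holds as an identity for \emph{every} input $\textbf{u}$, so I may specialize it to such a zeroed input. Since the inactive entries of $\textbf{u}_s$ vanish, in the block product $\sum_{s}\textbf{G}_{d,s}(z)\textbf{u}_s$ only the columns of $\textbf{G}_{d,s}(z)$ indexed by $\mathcal{I}_s$ survive. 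Collecting these surviving columns into a matrix $\tilde{\textbf{G}}_d(z)$ and the active inputs into a vector $\tilde{\textbf{u}}$, the relation becomes $P_d(z)\textbf{y}_d=\tilde{\textbf{G}}_d(z)\tilde{\textbf{u}}$, which is again of the form (\ref{diffeq}) with the \emph{same} nonzero polynomial $P_d(z)$. The columns of $\tilde{\textbf{G}}_d(z)$ are exactly a selection of $R_s'$ columns from each $\textbf{G}_{d,s}(z)$, so the combinatorial condition in the statement is precisely the requirement that $\tilde{\textbf{G}}_d(z)$ be of full column rank over $\mathbb{F}[z]$.

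The key observation is that Theorem \ref{theorem_decoding} applies verbatim to this reduced equation. The thinned system is the original network in which the inactive source inputs are simply never used; its transfer matrix is $\tilde{\textbf{G}}_d(z)$, and the same $P_d(z)$ still annihilates it because the identities $\textbf{C}_d P_d(\textbf{A})\textbf{A}^\tau\textbf{B}=\textbf{O}$ of (\ref{Inf_Mamal}) (and their Algorithm~2 analogues (\ref{InfMamalNew1})--(\ref{InfMamalNew2})) restrict to a subset of columns of $\textbf{B}$. Hence the decodability criterion of Theorem \ref{theorem_decoding} gives: the sink can recover $\tilde{\textbf{u}}$ from $\textbf{y}_d$ if and only if $\tilde{\textbf{G}}_d(z)$ has full column rank over $\mathbb{F}[z]$. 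Finally, recovering $\tilde{\textbf{u}}$ is the same as recovering $\textbf{u}$, since the sink knows a priori which inputs were deactivated and that they carry zeros.

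With this machinery the two directions are immediate, and I would present them as one short equivalence. For the forward direction, if the rates $(R_s')_{s\in\mathcal{S}}$ are achievable for $d$, then by the definition of achievability for a sink node there is a choice of active index sets $\{\mathcal{I}_s\}$ for which the sink decodes; the criterion then forces the corresponding $\tilde{\textbf{G}}_d(z)$ to have full column rank, and its columns are exactly the sought $R_s'$ columns $\{\textbf{v}_{s,k}\}_{k=1}^{R_s'}$ drawn from each $\textbf{G}_{d,s}(z)$ whose union is independent over $\mathbb{F}[z]$ (full column rank of a polynomial matrix is literally linear independence of its columns). Conversely, given such a family of columns, I would let each source activate exactly the inputs indexing them; then $\tilde{\textbf{G}}_d(z)$ is assembled from a linearly independent set of columns, hence has full column rank, and the criterion yields decodability, i.e. the rates are achievable for $d$.

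The step I expect to be the main obstacle is not any deep argument but the bookkeeping of the reduction in the first two paragraphs: one must verify carefully that zeroing the inactive inputs deletes exactly the matching columns of $\textbf{G}_d(z)$ (this is where the block decomposition $\textbf{G}_d(z)\textbf{u}=\sum_s\textbf{G}_{d,s}(z)\textbf{u}_s$ of (\ref{diffeq_many_sources}) is essential), and that the \emph{same} $P_d(z)$ remains a valid, nonzero annihilating polynomial for the thinned system, so that Theorem \ref{theorem_decoding} may be invoked on $P_d(z)\textbf{y}_d=\tilde{\textbf{G}}_d(z)\tilde{\textbf{u}}$ without recomputing anything. Once that is secured, the equivalence in the theorem is just the tautology that full column rank over $\mathbb{F}[z]$ means linear independence of the columns, combined with the decodability criterion.
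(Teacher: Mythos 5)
Your proposal is correct and follows essentially the same route as the paper: both reduce to the thinned equation $P_d(z)\textbf{y}_d=\tilde{\textbf{G}}_d(z)\tilde{\textbf{u}}$ obtained by zeroing the inactive inputs and then invoke the decodability criterion of Theorem \ref{theorem_decoding} (decodable iff full column rank over $\mathbb{F}[z]$) in both directions. Your added remark that the same $P_d(z)$ remains a valid annihilating polynomial because the identities in (\ref{Inf_Mamal}) restrict to a subset of the columns of $\textbf{B}$ is a detail the paper leaves implicit, but it does not change the argument.
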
 
\section{Conclusions}
The use of CNC schemes  requires one to choose local encoding kernels at the relay nodes 
that would allow the sink nodes to decode the transmitted symbols. The coefficients can be chosen randomly to simplify the network code
construction, but this would require the sink nodes to know the transfer function of the network. The algorithms we presented allow the sink nodes
to find a difference equation that enables decoding the transmitted from the received symbols without learning the exact topology of the network
and the chosen local encoding kernels. The capacity region can also be found from the obtained difference equation. 
The algorithms require the source nodes to transmit basis vectors and the sink nodes to solve a set of linear equations. 
Both the amount of transmissions every source node needs to perform and the number of linear equations every sink node needs to solve grow linearly with the number of edges and hence, the algorithms are considered efficient computationally.

\begin{appendices}
\section{Examples}
\begin{example} \label{example_alg1}
Consider the network shown in Fig.\ref{fig_network_algorithm}, with two source nodes $s_1,s_2$, one sink node $d$ and three relay nodes. The field on which the network operates is $\mathbb{F}_{2^8}$ with the primitive polynomial $t^8+t^4+t^3+t^2+1$ used to define the field.
The elements of the field $\mathbb{F}_{2^8}$ are polynomials of the form:
\begin{align}
\sum_{k=0}^{7}a_k t^k, \hspace{3 mm} \forall i: a_i\in\left\{0,1\right\}.  \label{finite_field_scalar}
\end{align}
For simplicity, we use an integer representation for every scalar from the field, such that every scalar is represented by a number between $0$ and $255$ whose binary representation $(a_7,a_6,...,a_0)$ is given by the elements $a_i$ from (\ref{finite_field_scalar}).
 
 All local encoding kernels were generated randomly and  are given by 
\begin{align}
x_1[n+1]&=37x_6[n]+108x_3[n],  \label{exm_local_encoding_kernels_start} \\
x_2[n+1]&=234x_1[n]+203x_8[n],  \\
x_3[n+1]&=245x_7[n]+168x_2[n],  \\
x_4[n+1]&=10x_1[n]+217x_8[n],  \\
x_5[n+1]&=239x_7[n]+174x_2[n],  \\
x_6[n+1]&=194u_{s_1,1}[n]+190u_{s_1,2}[n],  \\
x_7[n+1]&=101u_{s_1,1}[n]+168u_{s_1,2}[n],  \\
x_8[n+1]&=44u_{s_2,1}[n].
\label{exm_local_encoding_kernels_stop}
\end{align}

\begin{figure}[h]{
  \psfrag{s}[][][1]{$s_1$}
  \psfrag{t}[][][1]{$s_2$}
  \psfrag{d}[][][1]{$d$}
  \psfrag{x1}[][][1]{$x_1[n]$}
  \psfrag{x2}[][][1]{$x_2[n]$}
  \psfrag{x3}[][][1]{$x_3[n]$}
  \psfrag{x4}[][][1]{$x_4[n]$}
  \psfrag{x5}[][][1]{$x_5[n]$}
  \psfrag{u1}[][][1]{$x_6[n]$}
  \psfrag{u2}[][][1]{$x_7[n]$}
  \psfrag{u3}[][][1]{$x_8[n]$}
   \centerline{\includegraphics[width=12cm]{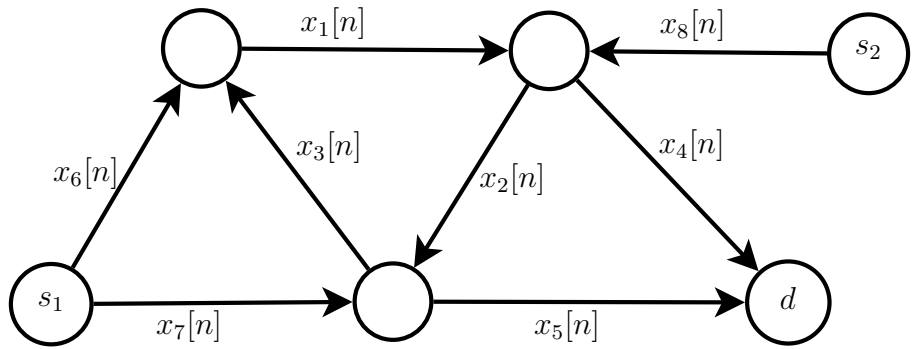}}
  \caption{Network with 2 source nodes, one sink node and 3 relay nodes.} \label{fig_network_algorithm}
  }\end{figure}
  All nodes know only the following facts:
  \begin{itemize}
  \item The source nodes list is $\mathcal{S}=\{s_1,s_2\}$ and the sink node is $d$.
  \item The network has not more than 8 edges ($N=8$).
  \item The number of output links for every source node: $|Out(s_1)|=2$ and $|Out(s_2)|=1$.
  \end{itemize}
  Note that even though the rates $(R_{s_1},R_{s_2})=(2,1)$ are not achievable (they do not satisfy the Min-Cut Max-Flow condition), we assume for now that this information is not known a priori. If it was, we could set the rates to $(R_{s_1},R_{s_2})=(1,1)$ (by setting $u_{s_1,2}[n]=0$) or to $(R_{s_1},R_{s_2})=(2,0)$ (by setting $u_{s_2,1}[n]=0$), since these rates are achievable. 
  In that case the whole initialization process would take 34 time units.. 
  The initialization process begins when $s_1$ and $s_2$ send the following sequences:
  \begin{align}
  u_{s_1,1}[n]&=1,0,0,...,0 \hspace{3 mm} \forall \hspace{1 mm} 0\leq n \leq 16,  \label{equation_first_sequence}\\
  u_{s_1,2}[n]&=0,0,0,...,0 \hspace{3 mm} \forall \hspace{1 mm} 0\leq n \leq 16, \nonumber \\
  u_{s_2,1}[n]&=0,0,0,...,0 \hspace{3 mm} \forall \hspace{1 mm} 0\leq n \leq 16. \nonumber
  \end{align}
 After $n=16$, all the incoming symbols are cleared, $n$ is set to zero, and the following sequences are sent:
    \begin{align}
    u_{s_1,1}[n]&=0,0,0,...,0 \hspace{3 mm} \forall \hspace{1 mm} 0\leq n \leq 16,  \label{equation_second_sequence}\\
    u_{s_1,2}[n]&=1,0,0,...,0 \hspace{3 mm} \forall \hspace{1 mm} 0\leq n \leq 16, \nonumber \\
    u_{s_2,1}[n]&=0,0,0,...,0 \hspace{3 mm} \forall \hspace{1 mm} 0\leq n \leq 16. \nonumber
    \end{align}
    Again, after $n=16$ the network is cleared, and the sent sequences are:
       \begin{align}
       u_{s_1,1}[n]&=0,0,0,...,0 \hspace{3 mm} \forall \hspace{1 mm} 0\leq n \leq 16,  \label{equation_third_sequence}\\
       u_{s_1,2}[n]&=0,0,0,...,0 \hspace{3 mm} \forall \hspace{1 mm} 0\leq n \leq 16, \nonumber \\
       u_{s_2,1}[n]&=1,0,0,...,0 \hspace{3 mm} \forall \hspace{1 mm} 0\leq n \leq 16. \nonumber
       \end{align}
  Meanwhile, the output sequence $\textbf{y}_d=({x}_4,{x}_5)^T$ received by $d$ is given by
    \begin{align}
    \textbf{M}_d[n]&=\left[\textbf{y}_d^{s_1,1}[n], \textbf{y}_d^{s_1,2}[n], \textbf{y}_d^{s_2,1}[n]\right] \hspace{5 mm} \forall n\in\left\{1,...,16\right\} \label{equation_first_received} \\
    &=\begin{bmatrix}
           0 & 0 & 0 \\
           0 & 0 & 0
         \end{bmatrix},
      \begin{bmatrix}
               0 & 0 & 231 \\
               157 & 13 & 0
             \end{bmatrix},
      \begin{bmatrix}
               57 & 73 & 0 \\
               0 & 0 & 228
             \end{bmatrix},
      \begin{bmatrix}
               113 & 63 & 0 \\
               185 & 105 & 0
             \end{bmatrix},
       \begin{bmatrix}
                0 & 0 & 228 \\
                1 & 101 & 0
              \end{bmatrix},..\nonumber.
    \end{align}
  and we have
  \begin{align}
  \textbf{M}_d[n+3]=209\textbf{M}_d[n] \hspace{5 mm} \forall \hspace{1 mm} 3\leq n \leq 13. \label{mahzoriyut}
  \end{align}
  We now solve the set of linear equations given in (\ref{mamal}). We look for some non trivial solution.  
  We can take, for example,
  \begin{align}
  P(z)&=\sum_{k=0}^8 \alpha_{d,k}z^k \\
  &=209z^2+z^5. \label{example_P(z)}
  \end{align}
  This is, indeed, a solution of (\ref{mamal}), as can be seen from (\ref{mahzoriyut}). The matrix $\textbf{G}_d(z)$ given by this solution is:
 \begin{align}
   &\left(209\textbf{M}_d[2]+\textbf{M}_d[5]\right) +\left(209\textbf{M}_d[1]+\textbf{M}_d[4]\right)z+\left(\textbf{M}_d[3]\right)z^2+\left(\textbf{M}_d[2]\right)z^3+\left(\textbf{M}_d[1]\right)z^4= \nonumber \\
      &=\begin{bmatrix}
          113z+57z^2 & 63z+73z^2 & 84+231z^3 \\
                    24+185z+157z^3 & 17+105z+13z^3 & 228z^2
                  \end{bmatrix}.
    \label{example_G(z)}
   \end{align}
   If we are interested in finding achievable rates from the matrix $\textbf{G}_d(z)$, we should apply Algorithm 3, as described in Example \ref{Example_Algorithm3}. For now, we set the rates to achievable ones:  $R_{s_1}=R_{s_2}=1$ (by sending $u_{s_1,2}[n]=0$), and we get the following relationship between the input and the output sequences:
   \begin{align}
   (z^5+209\cdot z^2)\textbf{y}_d=\begin{bmatrix}
             113z+57z^2 & 84+231z^3 \\
                       24+185z+157z^3 &  228z^2
                     \end{bmatrix}
                  \begin{pmatrix}
                     u_{s_1,1} \\
                     u_{s_2,1}
                  \end{pmatrix}. \label{example3_newdiffeq}
   \end{align}
   Note that if we had started the initialization process with the rates $R_{s_1}=R_{s_2}=1$, we would have obtained the following transfer matrix $\tilde{\textbf{G}}_d(z)$:
   \begin{align}
   \tilde{\textbf{G}}_d(z)=\begin{bmatrix}
                113z+57z^2 & 84+231z^3 \\
                          24+185z+157z^3 &  228z^2
                        \end{bmatrix}.
   \end{align} 
   We can solve (\ref{example3_newdiffeq}) for $\textbf{u}$ by multiplying both sides of the equation by $\left(42 \cdot adj(\tilde{\textbf{G}}_d(z))\right)$:
   \begin{align} &(105+223 z+152z^3+149z^4+ z^6)
      \begin{pmatrix}
       u_{s_1,1} \\
       u_{s_2,1}
      \end{pmatrix}= \nonumber \\
    &=\begin{bmatrix}
     221z^4+119z^7 & 65z^2+119z^5+9z^8 \\
     30z^2+208z^3+42z^5+112z^6+241z^8 & 42z^3 + 112 z^4 + 203z^6 + 212 z^7
          \end{bmatrix}
      \textbf{y}_d, \label{equation_example_algorithm1}
      \end{align}
      where we used the identity 
      \begin{align}
      42\cdot adj\left(\tilde{\textbf{G}}_d(z)\right)\tilde{\textbf{G}}_d(z)=42 \det(\tilde{\textbf{G}}_d(z))\textbf{I}.
      \end{align}
      We used the factor $42$ to make the coefficient of $z^6$ from the left side of (\ref{equation_example_algorithm1}) equal one. The difference equation for $\textbf{u}[n]$ is:
      \begin{align}
            \begin{pmatrix}
              u_{s_1,1}[n+6] \\
              u_{s_2,1}[n+6]
          \end{pmatrix}
           &=149 \begin{pmatrix}
        u_{s_1,1}[n+4] \\
       	u_{s_2,1}[n+4]
            \end{pmatrix}+152
            \begin{pmatrix}
        u_{s_1,1}[n+3] \\
        u_{s_2,1}[n+3]
            \end{pmatrix}
            +223 \begin{pmatrix}
        u_{s_1,1}[n+1] \\
        u_{s_2,1}[n+1]
            \end{pmatrix}
            +105 \begin{pmatrix}
        u_{s_1,1}[n] \\
        u_{s_2,1}[n]
            \end{pmatrix} \nonumber \\
   &+       \begin{pmatrix}
     221y_{d,1}[n+4]+119y_{d,1}[n+7] \\
     30y_{d,1}[n+2]+208y_{d,1}[n+3]+42y_{d,1}[n+5]+112y_{d,1}[n+6]+241y_{d,1}[n+8]
            \end{pmatrix}    \nonumber \\
    &+       \begin{pmatrix}
    65y_{d,2}[n+2]+119y_{d,2}[n+5]+9y_{d,2}[n+8] \\
    42y_{d,2}[n+3]+112y_{d,2}[n+4]+203y_{d,2}[n+6]+212y_{d,2}[n+7]
            \end{pmatrix},
            \end{align}
with the initial conditions
\begin{align}
\textbf{u}[n]=\textbf{0}, \forall n<0, \\
\textbf{y}_d[n]=\textbf{0}, \forall n<0.
\end{align}

\end{example}
\begin{example} \label{example_alg2}
We again look at the network in Fig.\ref{fig_network_algorithm}, with the field $\mathbb{F}_{2^8}$ and the same local encoding kernels as in the previous example. Now we assume there is an initial non zero state for the network:
\begin{align}
(x_1[0],...,x_8[0])^T=(50,64,157,121,90,212,149,140)^T.
\end{align}
We follow the instructions of Algorithm 2 to get a difference equation for $\textbf{u}$ and $\textbf{y}_d$. As in Example \ref{example_alg1}, we assume that achievable rates are not known yet and, therefore, we set the transmission rates to $R_{s_1}=2$ and $R_{s_2}=1$.
 At first, the source nodes $s_1$ and $s_2$ transmit the following sequences:
\begin{align}
u_{s_1,1}[n]=\begin{cases}
          1,& n=17 \\
          0,& \text{otherwise}
      \end{cases} \forall 0\leq n \textless 68, \nonumber \\
u_{s_1,2}[n]=\begin{cases}
          1,& n=34 \\
          0,& \text{otherwise}
      \end{cases} \forall 0\leq n \textless 68, \nonumber \\
u_{s_2,1}[n]=\begin{cases}
          1,& n=51 \\
          0,& \text{otherwise}
      \end{cases} \forall 0\leq n \textless 68. \nonumber \\
\end{align}
The output sequence $\left\{\textbf{y}_d[n]\right\}_{1\leq n \textless 68}$ is stored at the sink node $d$.
Here are some of the initial and final values of $\left\{\textbf{y}_d[n]\right\}$:
\begin{align}
\textbf{y}_d[n]=&\begin{pmatrix}
164 \\ 96
\end{pmatrix}, \begin{pmatrix}
253 \\ 6
\end{pmatrix}   \begin{pmatrix}
155 \\ 88
\end{pmatrix},...,\begin{pmatrix}
97 \\ 254
\end{pmatrix} \hspace{5 mm} \forall 1\leq n \leq 19, \\
&\begin{pmatrix}
63 \\ 144
\end{pmatrix}, \begin{pmatrix}
18 \\ 101
\end{pmatrix},\begin{pmatrix}
144 \\ 46
\end{pmatrix},...,\begin{pmatrix}
225 \\ 209
\end{pmatrix},\begin{pmatrix}
172 \\ 108
\end{pmatrix}, \hspace{5 mm} \forall 20\leq n \leq 67.
\end{align}
A solution of (\ref{mamalNew}) leads to the same solution as in the previous example and, therefore the same decoding method can be used.
\begin{align}
P_d(z)&=\sum_{j=0}^{N}\alpha_{d,j}z^j  \nonumber \\
&=z^5+209z^2, \nonumber \\
\textbf{G}_d(z)&=\begin{bmatrix}
          113z+57z^2 & 63z+73z^2 & 84+231z^3 \\
                    24+185z+157z^3 & 17+105z+13z^3 & 228z^2
                  \end{bmatrix}. \nonumber
\end{align}
\end{example}
\begin{example} \label{example_alg3}
We return to the network in Fig. \ref{fig_network_algorithm}, with the same field and coefficients as in Example \ref{example_alg1}. After applying Algorithm 1 or 2, we get the polynomial $P_d(z)$ and the transfer matrix $\textbf{G}_d(z)$, as given in (\ref{example_P(z)}) and (\ref{example_G(z)}). We are interested in achievable rates for the sources $s_1,s_2$, so we follow the instructions given in Algorithm 3. We split $\textbf{G}_d(z)$ into two matrices:
   \begin{align}
   \textbf{G}_{d,s_1}(z)=\begin{bmatrix}
113z+57z^2 & 63z+73z^2 \\
 24+185z+157z^3 & 17+105z+13z^3
                  \end{bmatrix}, \hspace{5 mm}
   \textbf{G}_{d,s_2}(z)=
   \begin{bmatrix}
 84+231z^3 \\
 228z^2                        \end{bmatrix}.
   \end{align}
   The rates $(R_{s_1},R_{s_2})=(1,1)$ are achievable, since the vectors
   \begin{align}
  \textbf{v}_1&=(113z+57z^2, 24+185z+157z^3)^T\in \textrm{Columns of }(\textbf{G}_{d,s_1}), \\
      \textbf{v}_2&=(84+231z^3, 228z^2 )^T\in \textrm{Columns of }(\textbf{G}_{d,s_2})
   \end{align}
   are linearly independent over the polynomial ring $\mathbb{F}[z]$. The rates $(R_{s_1},R_{s_2})=(2,0)$ are also achievable, since $\textbf{G}_{d,s_1}$ is of full rank over the polynomial ring $\mathbb{F}[z]$.

\label{Example_Algorithm3}
\end{example}
\section{Proofs}
\begin{proof}[Proof for Lemma \ref{lemma_Infinite_to_Finite_equations}]
A direct consequence of the Cayley-Hamilton Theorem is that for every $N\times N$ matrix $\textbf{A}$, its power $\textbf{A}^\tau$ can be written as a linear combination of $\textbf{I},\textbf{A},\textbf{A}^2,...,\textbf{A}^{N-1}$ for $\tau\geq N$. Therefore, by substituting this into (\ref{Inf_Mamal}) we get for every  $\tau\geq N$:
\begin{align}
\textbf{C}_dP_d(\textbf{A})\textbf{A}^\tau \textbf{B} &= \textbf{C}_dP_d(\textbf{A})\sum_{i=0}^{N-1}\gamma_i \textbf{A}^i \textbf{B} \\
&= \sum_{i=0}^{N-1} \gamma_i \left(\textbf{C}_dP_d(\textbf{A})\textbf{A}^i\textbf{B}\right) \\
&=\textbf{O},
\end{align}
where the last equality holds because $P_d(z)$ satisfies (\ref{Fin_Mamal}).

\end{proof}
\begin{proof}[Proof for Theorem \ref{theorem_decoding}]
We note first that if the network is described by the state-space equations (\ref{state-space-1]})-(\ref{state-space-2]}), then, by Lemmas \ref{lemma_markov_parameters}-\ref{lemma_infinite-finite_mamal}, the polynomial $P_d(z)=\sum_{k=0}^{N} \alpha_{d,k}z^k$ found in Algorithm 1 satisfies (\ref{Inf_Mamal}), and the transfer matrix $\textbf{G}_d(z)$ is given by
\begin{align}
\textbf{G}_d(z)&=\sum_{k=1}^{N}\sum_{j=k}^N \alpha_{d,j} \textbf{M}_d[j-k+1]z^{k-1}+\textbf{M}_d[0]P_d(z) \\
&=\sum_{k=1}^N \left(\sum_{j=k}^N \alpha_{d,j} \textbf{C}_d \textbf{A}^{j-k}\textbf{B}\right)z^{k-1}+P_d(z)\textbf{D}_d, \label{Gd}
\end{align}
where $\left\{\textbf{M}_d[n]\right\}$ are the Markov parameters of the network.
From (\ref{general_sol}), for every $n \geq 0$ we have:
\begin{align}
\textbf{y}_d[n+1]&=\sum_{i=0}^n \textbf{C}_d\textbf{A}^{n-i}\textbf{B}\textbf{u}[i]+\textbf{D}_d\textbf{u}[n+1].
\end{align}
By applying the $P_d(z)$ operator on both sides we get:
\begin{align}
\left(P_d(z)\textbf{y}_d-P_d(z)\textbf{D}_d\textbf{u}\right)[n+1]&=\sum_{j=0}^N \alpha_{d,j}\textbf{y}_d[n+j+1]-\left(P_d(z)\textbf{D}_d\textbf{u}\right)[n+1].
\end{align}
By expanding $\textbf{y}_d[n+j+1]$ and changing the summation order, we get:
\begin{align}
&\left(P_d(z)\textbf{y}_d-P_d(z)\textbf{D}_d\textbf{u}\right)[n+1]= \\
&=\sum_{j=0}^N \sum_{i=0}^{n+j} \alpha_{d,j}\textbf{C}_d\textbf{A}^{n+j-i}\textbf{B}\textbf{u}[i]\\
 &\stackrel{(a)}{=}\left(\sum_{i=0}^{n}\sum_{j=0}^{N}+\sum_{i=n+1}^{n+N}\sum_{j=i-n}^{N}\right)\alpha_{d,j}\textbf{C}_d\textbf{A}^{n+j-i}\textbf{B}\textbf{u}[i]\\
&=\sum_{i=0}^{n}\textbf{C}_d\left(\sum_{j=0}^{N}\alpha_{d,j}\textbf{A}^{j}\right)\textbf{A}^{n-i}\textbf{B}\textbf{u}[i]+\sum_{i=n+1}^{n+N}\left(\sum_{j=i-n}^{N} \alpha_{d,j}\textbf{C}_d \textbf{A}^{n+j-i}\textbf{B}\right)\textbf{u}[i]\\
&\stackrel{(b)}{=}0+\sum_{k=1}^N\left(\sum_{j=k}^N \alpha_{d,j} \textbf{C}_d\textbf{A}^{j-k}\textbf{B}\right)\textbf{u}[n+k] \label{tzimtzum}\\
&=\left(\sum_{k=1}^N\left(\sum_{j=k}^N \alpha_{d,j} \textbf{C}_d\textbf{A}^{j-k}\textbf{B}\right)z^k\textbf{u}\right)[n],
\end{align}
where
\begin{itemize}
\item[(a)] is obtained by changing the summation order,
\item[(b)] follows from the fact that $P_d(z)$ satisfies (\ref{Inf_Mamal}) and by changing a summation variable $k=i-n$.
\end{itemize}
Therefore, we get:
\begin{align}
P_d(z)\textbf{y}_d&=\left(\sum_{k=1}^{N}\left(\sum_{j=k}^N \alpha_{d,j} \textbf{C}_d\textbf{A}^{j-k}\textbf{B}\right)z^{k-1}+P_d(z)\textbf{D}_d\right)\textbf{u}\\
&=\textbf{G}_d(z)\textbf{u}.
\end{align}
Note that since $\textbf{x}_0=\textbf{0}$ and $\textbf{u}[n]$ and $\textbf{y}_d[n]$ vanish for $n<0$, the state equations (\ref{state-space-1]}-\ref{state-space-2]}) hold for all $n\in\mathbb{Z}$ and, therefore, (\ref{diffeq}) also holds for all $n\in\mathbb{Z}$.

We now prove the second part of the theorem that states that it is possible to decode the input sequence from the output  if and only if the matrix $\textbf{G}_d(z)$ is of full column rank over the polynomial ring $\mathbb{F}[z]$.
If it is not, there exists a vector of polynomials $\textbf{v}_d(z)$ such that
\begin{align}
\textbf{G}_d(z)\textbf{v}_d(z)=\textbf{0}.
\end{align}
Denote the maximal degree of the polynomials in $\textbf{v}_d(z)$ by $\delta$.
Let $\textbf{u}_{d,0}[n]$ be the sequence defined by $\textbf{u}_{d,0}[n]=\left(\textbf{v}(z)\psi\right)[n]$, where the sequence $\psi[n]$ is
\begin{align}
\psi[n]=\begin{cases}
1 &\mbox{if } n=\delta \\
0 &\mbox{if } n\neq \delta
\end{cases}.
\end{align}
Note that $\textbf{u}_{d,0}[n]$ vanishes for $n<0$ and, therefore, $\textbf{u}_{d,0}$ is a legal input sequence 
 that will lead to a zero sequence $\textbf{G}_d(z)\textbf{u}_{d,0}$. In view of (\ref{diffeq}), this will lead to a zero  sequence $P_d(z)\textbf{y}_d$. We assumed that $P_d(z)$ is not the zero polynomial and, hence, the output sequence $\textbf{y}_d$ will also vanish, since the relationship between $\textbf{y}_d$ and $P_d(z)\textbf{y}_d$ is injective. In that case, no decoding method will tell if the input sequence was $\textbf{u}_{d,0}[n]$ or a totally zero sequence.

 On the other hand, if $\textbf{G}_d(z)$ is of full column rank then we can show that the input sequence can be decoded from the output sequence. We apply a decoding scheme that is slightly different to the sequential decoder \cite{CITE_Efficient_code_design}.
 We multiply both sides of (\ref{diffeq}) by $adj(\textbf{G}_d(z))$ (we can assume that $\textbf{G}_d(z)$ is a square matrix, since, if not, we can remove some of its linearly dependent rows to make it square) to get 
 \begin{align}
 \textbf{q} &\coloneqq P_d(z)\textbf{y}_d \label{Eq_q} \\
 \textbf{w} &\coloneqq adj(\textbf{G}_d(z))\textbf{q} \label{diffeq_solved} \\
 &\stackrel{(a)}{=}adj(\textbf{G}_d(z))\textbf{G}_d(z)\textbf{u} \nonumber \\
 &\stackrel{(b)}{=}\det(\textbf{G}_d(z))\textbf{u} \nonumber \\
 &=f_d(z)\textbf{u}, \nonumber
 \end{align}
 where
 \begin{itemize}
 \item[(a)] follows from (\ref{diffeq}),
 \item[(b)] follows from the fact that for any square matrix $\textbf{G}$ over the polynomial ring $\mathbb{F}[z]$, the following identity holds:
 \begin{align}
 adj(\textbf{G})\textbf{G}=\det(\textbf{G})\textbf{I}, \nonumber
 \end{align}
 \end{itemize}
 where $\textbf{I}$ denotes the identity matrix.
 The polynomial $f_d(z)\coloneqq \det(\textbf{G}_d(z))$ is a non zero polynomial, since we assumed that $\textbf{G}_d(z)$ is of full column rank.
 \begin{align}
 f_d(z)&=\sum_{i=0}^{k} \alpha_i z^i, \hspace{2 mm} \alpha_k\neq 0. \nonumber
 \end{align}
If we know the sequence $\textbf{y}_d[n]$ we can compute $\textbf{w}[n]$ and from that find $\textbf{u}[n]$:
\begin{align}
\textbf{w}[n]&=\sum_{i=0}^{k} \alpha_i \textbf{u}[n+i],  \label{diffeq_solved2} \\
\textbf{u}[n]&=\left(\alpha_k\right)^{-1}\left(\textbf{w}[n-k]-\sum_{i=0}^{k-1}\alpha_i \textbf{u}[n-k+i]\right).
\end{align}
\end{proof}
\begin{proof}[Proof for Theorem \ref{theorem_alg2_polynomial_mamal}]
We first prove that if $P_d(t)$ satisfies (\ref{InfMamalNew1})-(\ref{InfMamalNew2}), then its coefficients are a solution of (\ref{mamalNew}). We substitute the expression for $\textbf{y}_d[n]$ from (\ref{output_sequence_no_reset}) into (\ref{mamalNew}) to get:
\begin{align}
&\sum_{j=0}^{N} \alpha_{d,j}\textbf{y}_d[j+\tau]= \\
&=\sum_{j=0}^{N} \alpha_{d,j}\textbf{C}_d\textbf{A}^{j+\tau}\textbf{x}_0+\sum_{j=0}^{N}\sum_{k=1}^{\min\left\{m,\floor{\frac{j+\tau-1}{2N+1}}\right\}}\alpha_{d,j}\textbf{C}_d\textbf{A}^{j+\tau-1-(2N+1)k}\textbf{B}\textbf{e}_k \nonumber\\
&+\sum_{j=0}^{N}\sum_{k=1}^{m}\alpha_{d,j}\textbf{D}_d\textbf{e}_k1_{\left\{j+\tau=(2N+1)k     \right\}} \\
&\stackrel{(a)}{=}\textbf{C}_dP_d(\textbf{A})\textbf{A}^\tau \textbf{x}_0+\sum_{j=0}^{N}\sum_{k=1}^{\floor{\tau/(2N+1)}}\alpha_{d,j}\textbf{C}_d\textbf{A}^{j+\tau-1-(2N+1)k}\textbf{B}\textbf{e}_k \nonumber\\
&+\sum_{j=0}^{N}\sum_{k=1}^{m}\alpha_{d,j}\textbf{D}_d\textbf{e}_k1_{\left\{j+\tau=(2N+1)k     \right\}} \\
&\stackrel{(b)}{=}\textbf{C}_dP_d(\textbf{A})\textbf{A}^{\tau}\textbf{x}_0+\sum_{k=1}^{\floor{\tau/(2N+1)}}\textbf{C}_dP_d(\textbf{A})\textbf{A}^{\tau-1-(2N+1)k}\textbf{B}\textbf{e}_k +\textbf{0} \\
&\stackrel{(c)}{=}\textbf{0},
\end{align}
where
\begin{itemize}
\item[(a)] follows from the fact that $0\leq j \leq N$ and that
\begin{align}
\tau=(2N+1)p+\tilde{\tau}, \hspace{3 mm} \text{where} \hspace{3 mm} 0\leq p \leq m, \hspace{2 mm} 1\leq \tilde{\tau} \leq N,
\end{align}
and therefore
\begin{align}
\floor{\frac{j+\tau-1}{2N+1}}=\floor{p+\frac{j+\tilde{\tau}-1}{2N+1}}=p,
\end{align}  
\item[(b)] follows from the fact that $j+\tau$ cannot be a multiple of $(2N+1)$,
\item[(c)] follows from (\ref{InfMamalNew1})-(\ref{InfMamalNew2}).
\end{itemize}
We now prove the veracity of Theorem \ref{theorem_alg2_polynomial_mamal} in the reverse direction. We assume the coefficients $\left\{\alpha_{d,j}\right\}_{j=0}^N$ of the polynomial $P_d(t)=\sum_{j=0}^{N} \alpha_{d,j}t^j$ satisfy (\ref{mamalNew}) and show that $P_d(t)$ satisfies (\ref{InfMamalNew1})-(\ref{InfMamalNew2}). We first note that
\begin{align}
\textbf{y}_d[n]=\textbf{C}_d\textbf{A}^n\textbf{x}_0 \hspace{5 mm} \forall 0\leq n \leq 2N,
\end{align}
so if (\ref{mamalNew}) is satisfied for $\tau\in\{1,...,N\}$, then  $(\ref{InfMamalNew2})$ is also satisfied for $\tau\in\{0,...,N-1\}$ and, hence, for all $\tau\geq 0$ (by Lemma \ref{lemma_infinite-finite_mamal2}). For $\tau\in\left\{(2N+1)+1,...,(2N+1)+N\right\}$, we have:
\begin{align}
\textbf{0}&=\sum_{j=0}^{N}\alpha_{d,j}\textbf{y}_d[j+\tau] \\
&=\sum_{j=0}^{N}\alpha_{d,j}\textbf{C}_d\textbf{A}^{j+\tau-1-(2N+1)}\textbf{B}\textbf{e}_1+\sum_{j=0}^{N}\alpha_{d,j}\textbf{C}_d\textbf{A}^{j+\tau}\textbf{x}_0 \\
&\stackrel{(a)}{=}\textbf{C}_dP_d(\textbf{A})\textbf{A}^{\tau-1-(2N+1)}\textbf{B}\textbf{e}_1+\textbf{0},
\end{align}
where (a) is because (\ref{InfMamalNew2}) holds. In view of Lemma \ref{lemma_infinite-finite_mamal2}, we see that (\ref{InfMamalPartial}) holds for $k=1$.
\begin{align}
\textbf{C}_dP_d(\textbf{A})\textbf{A}^{\tau}\textbf{B}\textbf{e}_k=\textbf{0}, \hspace{5 mm} \forall \tau\geq 0. \label{InfMamalPartial}
\end{align}
By induction on $k$, we can prove that (\ref{InfMamalPartial}) holds for all $k\in\{1,...,m\}$. For all $\tau \in \left\{(2N+1)k+1,...,(2N+1)k+N\right\}$, we have:
\begin{align}
\textbf{0}&=\sum_{j=0}^{N}\alpha_{d,j}\textbf{y}_d[j+\tau] \\
&=\sum_{k'=1}^k\sum_{j=0}^{N}\alpha_{d,j}\textbf{C}_d\textbf{A}^{j+\tau-1-(2N+1)k'}\textbf{B}\textbf{e}_{k'}+\sum_{j=0}^{N}\alpha_{d,j}\textbf{C}_d\textbf{A}^{j+\tau}\textbf{x}_0 \\
&=\textbf{C}_dP_d(\textbf{A})\textbf{A}^{\tau-1-(2N+1)k}\textbf{B}\textbf{e}_k+\sum_{k'=1}^{k-1}\textbf{C}_dP_d(\textbf{A})\textbf{A}^{\tau-1-(2N+1)k'}\textbf{B}\textbf{e}_{k'}+\textbf{0} \\
&\stackrel{(a)}{=}\textbf{C}_dP_d(\textbf{A})\textbf{A}^{\tau-1-(2N+1)k}\textbf{B}\textbf{e}_k+\textbf{0},
\end{align}
where (a) follows from the induction assumption of (\ref{InfMamalPartial}) on $k'\textless k$. In view of Lemma \ref{lemma_infinite-finite_mamal2}, we see that (\ref{InfMamalPartial}) holds for all $0\leq k \leq m$ and, therefore, (\ref{InfMamalNew1}) holds as well.
\end{proof}
\begin{proof}[Proof for Theorem \ref{theorem_main2}]
First we note that the polynomial $P_d(z)=\sum_{k=0}^{N}\alpha_{d,k}z^k$ found in Algorithm 2 satisfies  (\ref{InfMamalNew1})-(\ref{InfMamalNew2}) and the transfer matrix $\textbf{G}_d(z)$ can be described as follows:
\begin{align}
\textbf{G}_d(z)&=\left[\textbf{g}_{d,1}(z),\textbf{g}_{d,2}(z),...,\textbf{g}_{d,m}(z)\right],  \\
  \textbf{g}_{d,i}(z)&=\sum_{k=1}^{N+1} \sum_{j=0}^N \alpha_{d,j}\textbf{y}_{d}[j+(2N+1)i-k+1]z^{k-1}, \hspace{5 mm} \forall i\in\{1,...,m\}, \label{equation_derivation2_g_d_i}
\end{align}
where $\left\{\textbf{y}_d[n]\right\}$ are defined in (\ref{output_sequence_no_reset}).
To prove the theorem, first we find an expression for $\textbf{g}_{d,i}(z)$ by substituting (\ref{output_sequence_no_reset}) into (\ref{equation_derivation2_g_d_i}), and we show that $\textbf{G}_d(z)$ is given by (\ref{Gd}). Then we explain why (\ref{equation_theorem_main2}) holds despite the fact that $\textbf{x}_0\neq \textbf{0}$.
\begin{align}
\textbf{g}_{d,i}(z)=&\sum_{k=1}^{N+1} \sum_{j=0}^N \alpha_{d,j}\textbf{y}_{d}[j+(2N+1)i-k+1]z^{k-1} \nonumber \\
=&\sum_{k=1}^{N+1} \sum_{j=0}^N \alpha_{d,j}\textbf{C}_d\textbf{A}^j\textbf{A}^{(2N+1)i-k+1}\textbf{x}_0z^{k-1}  \nonumber \\
&+\sum_{k=1}^{N+1} \sum_{j=0}^N \sum_{k'=0}^{\min\{m,\floor{\frac{j-k}{2N+1}+i}\}} \alpha_{d,j}\textbf{C}_d\textbf{A}^{j-k+(2N+1)(i-k')}\textbf{B}\textbf{e}_{k'}z^{k-1} \nonumber \\
&+\sum_{k=1}^{N+1} \sum_{j=0}^N \alpha_{d,j}\sum_{k'=1}^{m}\textbf{D}_d\textbf{e}_{k'}1_{\left\{j+(2N+1)i-k+1=(2N+1)k'\right\}}z^{k-1}. \label{g_d,i_murhav}
\end{align}
The first part of (\ref{g_d,i_murhav}) vanishes because of (\ref{InfMamalNew2})
\begin{align}
\sum_{k=1}^{N+1} \sum_{j=0}^N \alpha_{d,j}\textbf{C}_d\textbf{A}^j\textbf{A}^{(2N+1)i-k+1}\textbf{x}_0z^{k-1} &= \sum_{k=1}^{N+1} \textbf{C}_dP_d(\textbf{A})\textbf{A}^{(2N+1)i-k+1}\textbf{x}_0z^{k-1} \nonumber \\
&= \textbf{0}. \nonumber
\end{align}
The second part of (\ref{g_d,i_murhav}) can be written as:
\begin{align}
&\sum_{k=1}^{N+1} \sum_{j=0}^N \sum_{k'=0}^{\min\{m,\floor{\frac{j-k}{2N+1}+i}\}} \alpha_{d,j}\textbf{C}_d\textbf{A}^{j-k+(2N+1)(i-k')}\textbf{B}\textbf{e}_{k'}z^{k-1} \nonumber \\
&\stackrel{(a)}{=} \sum_{k=1}^N \sum_{j=k}^N  \alpha_{d,j}\textbf{C}_d\textbf{A}^{j-k}\textbf{B}\textbf{e}_{i}z^{k-1}
+\sum_{k=1}^{N+1} \sum_{j=0}^N \sum_{k'=0}^{i-1} \alpha_{d,j}\textbf{C}_d\textbf{A}^j\textbf{A}^{(2N+1)(i-k')-k}\textbf{B}\textbf{e}_{k'}z^{k-1} \nonumber \\
&=\sum_{k=1}^N \sum_{j=k}^N  \alpha_{d,j}\textbf{C}_d\textbf{A}^{j-k}\textbf{B}\textbf{e}_{i}z^{k-1}
+\sum_{k=1}^{N+1} \sum_{k'=0}^{i-1} \textbf{C}_dP_d(\textbf{A})\textbf{A}^{(2N+1)(i-k')-k}\textbf{B}\textbf{e}_{k'}z^{k-1} \nonumber \\
 &\stackrel{(b)}{=}\sum_{k=1}^N \sum_{j=k}^N  \alpha_{d,j}\textbf{C}_d\textbf{A}^{j-k}\textbf{B}\textbf{e}_{i}z^{k-1},
\end{align}
where
\begin{itemize}
\item[(a)] is obtained by splitting the summation over $k'$ from $0$ to $i-1$, and for $k'=i$ (only when $k\leq j$).
\item[(b)] holds because $P_d(z)$ satisfies (\ref{InfMamalNew1}).
\end{itemize}
The third part of (\ref{g_d,i_murhav}) is:
\begin{align}
&\sum_{k=1}^{N+1} \sum_{j=0}^N \alpha_{d,j}\sum_{k'=1}^{m}\textbf{D}_d\textbf{e}_{k'}1_{\left\{j+(2N+1)(i-k')-k+1=0\right\}}z^{k-1} \nonumber \\
&=\sum_{j=0}^{N}\alpha_{d,j}z^j\textbf{D}_d\textbf{e}_i \nonumber \\
&=P_d(z)\textbf{D}_d\textbf{e}_i.
\end{align}
By combining all $\textbf{g}_{d,i}$ vectors into a matrix we get that $\textbf{G}_d(z)$ is given by (\ref{Gd}). It was already proved in Theorem \ref{theorem_decoding} that for such a $\textbf{G}_d(z)$ matrix, the difference equation (\ref{diffeq}) holds between $\textbf{u}$ and $\textbf{y}_d$, provided that $\textbf{x}_0=\textbf{0}$. For cases in which the initial state is not zero, the output sequence $\textbf{y}_d[n]$ can be written as a sum of the zero input response $\textbf{y}_{\text{ZIR},d}[n]$ and the zero state response $\textbf{y}_{\text{ZSR},d}[n]$ sequences, where:
\begin{align}
\textbf{y}_{\text{ZIR},d}[n]=\textbf{C}_d\textbf{A}^n\textbf{x}_0,  
\end{align}
and $\textbf{y}_{\text{ZSR},d}[n]$ is the output of the network, as if the initial state were zero. Finally, using Theorem \ref{theorem_decoding} for $\textbf{y}_{\text{ZSR},d}[n]$ and (\ref{InfMamalNew2}) for $\textbf{y}_{\text{ZIR},d}[n]$ we get for all $n\geq 1$:
\begin{align}
\left(P_d(z)\textbf{y}_d\right)[n]&=\left(P_d(z)\textbf{y}_{\text{ZIR},d}\right)[n]+\left(P_d(z)\textbf{y}_{\text{ZSR},d}\right)[n] \\
&=\textbf{C}_dP_d(\textbf{A})\textbf{A}^n\textbf{x}_0+\left(\textbf{G}_d(z)\textbf{u}\right)[n] \\
&=\left(\textbf{G}_d(z)\textbf{u}\right)[n].
\end{align}

\end{proof}
\begin{proof}[Proof for Theorem \ref{theorem_alg4}]
First, we show that if for every $s\in\mathcal{S}$ there are $R_s'$ linearly independent column vectors $\textbf{v}_{s,1},...,\textbf{v}_{s,R_s}$ from the columns of the matrix $\textbf{G}_{d,s}(z)$, such that  $\cup_{s\in\mathcal{S}}\cup_{k=1}^{R_s'} \{\textbf{v}_{s,k}\}$ is a set of linearly independent vectors,  then the rates $(R_s')_{s\in\mathcal{S}}$ are achievable for the sink node $d$ with the current local encoding kernels. Every source node $s$ can transmit its input symbols only on the inputs $u_{s,i}$ that correspond to the vectors $\left\{\textbf{v}_{s,1},...,\textbf{v}_{s,R_s'}\right\}$ and zeros on the other inputs:
\begin{align}
u_{s,i}= \begin{cases}
\text{The zero sequence} &\mbox{if } \text{column $i$ of $\textbf{G}_{d,s}(z)$} \notin \left\{\textbf{v}_{s,1},...,\textbf{v}_{s,R_s'}\right\} \\
\text{A non zero sequence} &\mbox{if } \text{column $i$ of $\textbf{G}_{d,s}(z)$} \in \left\{\textbf{v}_{s,1},...,\textbf{v}_{s,R_s'}\right\}
\end{cases}.
\end{align}
In that case (\ref{diffeq}) can be simplified into
\begin{align}
P_d(z)\textbf{y}_d=\tilde{\textbf{G}}_d(z)\tilde{\textbf{u}}, \label{diffeq_tilde}
\end{align}
where $\tilde{\textbf{u}}[n]$ is the input sequence with all zero inputs  removed and $\tilde{\textbf{G}}_d(z)$ is the matrix 
\begin{align}
\tilde{\textbf{G}}_d(z)=\left[\textbf{v}_{s_1,1},...,\textbf{v}_{s_1,R_{s_1}'} \textbf{v}_{s_2,1},...,\textbf{v}_{s_2,R_{s_2}'},...,\textbf{v}_{s_{|\mathcal{S}|},R_{s_{|\mathcal{S}|}}'} \right].
\end{align}
From the proof of the second part of Theorem \ref{theorem_decoding}, we know that it is possible to decode $\tilde{\textbf{u}}$ from $\textbf{y}_d$ if and only if the matrix $\tilde{\textbf{G}}_{d}(z)$ is of full column rank, i.e. all of its column vectors are linearly independent over $\mathbb{F}[z]$. We assumed that the columns of $\tilde{\textbf{G}}_d(z)$ are linearly independent and, therefore, the rates $(R_s')_{s\in\mathcal{S}}$ are achievable for the sink node $d$.

We now prove the second part of Theorem \ref{theorem_alg4}. We assume that the rates $(R_s')_{s\in\mathcal{S}}$ are achievable for a sink node $d$ and we show that for every $s\in\mathcal{S}$ there are $R_s'$ linearly independent column vectors $\textbf{v}_{s,1},...,\textbf{v}_{s,R_s'}$ from the columns of the matrix $\textbf{G}_{d,s}(z)$, such that the vectors $\cup_{s\in\mathcal{S}}\cup_{k=1}^{R_s'} \{\textbf{v}_{s,k}\}$ are linearly independent. By the definition of achievable rates, we know that every source $s\in\mathcal{S}$ can transmit zeros on $(R_s-R_s')$ of its input sequences, such that $d$ will be able to decode $\textbf{u}$ from $\textbf{y}_d$. If that is the case, the relationship between the input and the output sequences is described by (\ref{diffeq_tilde}),  where $\tilde{\textbf{u}}[n]$ is the input sequence with all zero inputs  removed, and $\tilde{\textbf{G}}_d(z)$ is the matrix $\textbf{G}_d(z)$ with removed column vectors that correspond to the zero input sequences. Since $\tilde{\textbf{u}}$ is decodable, we know that the column vectors of $\tilde{\textbf{G}}_d(z)$ are linearly independent over $\mathbb{F}[z]$. The $R_s'$ column vectors of $\tilde{\textbf{G}}_d(z)$ that correspond to the non zero inputs of a source node $s$ are also column vectors of $\textbf{G}_{d,s}(z)$, since $\textbf{G}_{d,s}(z)$ contains all the column vectors of $\textbf{G}_{d}(z)$ that correspond to the input sequence $\textbf{u}_{s}$. Therefore, we showed that for every source $s$, there are $R_s'$ linearly independent vectors from the columns of $\textbf{G}_{d,s}(z)$, such that all the vectors together are linearly independent. This completes the proof.
\end{proof}
\end{appendices}

\section*{Acknowledgments}
The authors would like to thank Dr. Izchak Lewkowicz for his lectures in Linear Control theory, which were of great help. 

\bibliographystyle{IEEEtran}
\bibliography{IEEEabrv,citat}

\end{document}